\newcommand{\QTC}{\ComplexityFont{QTC}}
\newcommand{\BQTC}{\ComplexityFont{BQTC}}
\newcommand{\BQAC}{\ComplexityFont{BQAC}}
\newtheorem{theorem}{Theorem}
\newtheorem{corollary}[theorem]{Corollary}
\newtheorem{definition}[theorem]{Definition}
\newtheorem{lemma}[theorem]{Lemma}
\newtheorem{proposition}[theorem]{Proposition}
\newcommand{\AND}{\mathsf{AND}}
\newcommand{\XOR}{\mathsf{XOR}}
\newcommand{\CNOT}{\mathsf{CNOT}}
\newcommand{\CZ}{\mathsf{CZ}}
\renewcommand{\MOD}{\mathsf{MOD}}
\newcommand{\Fp}{\mathbb{F}_p}
\newcommand{\Maj}{\operatorname{Maj}}
\newcommand{\Fanout}{\mathsf F} 
\renewcommand{\M}{\mathsf M}
\newcommand{\Threshold}{\mathsf{Th}} 
\newcommand{\Parity}{\mathsf P} 
\begin{document}

\title{Quantum Threshold is Powerful}
\author{
Daniel Grier\thanks{UCSD. \ Email: \texttt{dgrier@ucsd.edu}}
\and 
Jackson Morris\thanks{UCSD. \ Email: \texttt{jrm035@ucsd.edu}}
}
\date{}
\maketitle

\begin{abstract}
In 2005, H{\o}yer and {\v S}palek showed that constant-depth quantum circuits augmented with multi-qubit Fanout gates are quite powerful, able to compute a wide variety of Boolean functions as well as the quantum Fourier transform. They also asked what other multi-qubit gates could rival Fanout in terms of computational power, and suggested that the quantum Threshold gate might be one such candidate. Threshold is the gate that indicates if the Hamming weight of a classical basis state input is greater than some target value. 

We prove that Threshold is indeed powerful---there are polynomial-size constant-depth quantum circuits with Threshold gates that compute Fanout to high fidelity. Our proof is a generalization of a proof by Rosenthal that exponential-size constant-depth circuits with generalized Toffoli gates can compute Fanout. Our construction reveals that other quantum gates able to ``weakly approximate'' Parity can also be used as substitutes for Fanout.

\end{abstract}

\section{Introduction\label{INTRO}}

To what extent are large multi-qubit gates useful for quantum computation? On the one hand, it is well-known that every multi-qubit gate can be decomposed into a circuit of simpler $1$- and $2$-qubit gates. On the other hand, this decomposition may introduce large overheads both in terms of gate count and circuit depth. Given that some multi-qubit gates might be experimentally feasible \cite{saffman2009efficient, levine2019parallel, molmer1999multiparticle}, it's natural to ask what kinds of computational powers they unlock. 

Specifically, we focus on the power of these large multi-qubit gates in constant depth. Such shallow circuits are experimentally appealing due to the possibility for less decoherence. Moreover, even shallow quantum circuits with $1$- and $2$-qubit gates are known to be surprisingly powerful, exhibiting quantum advantage in a variety of settings \cite{bravyi2018quantum, grier2020interactive, terhal2002adaptive, haferkamp2020closing}. Given the inherent complexity of simulating such circuits, there is the exciting possibility that augmenting these circuit models with large multi-qubit gates might lead to constant-depth implementations of practical quantum algorithms.

Much of the excitement about such circuit models is driven by a single gate---the multi-qubit Fanout gate---which is the quantum operation that copies \emph{classical} information:
\[
\Fanout_n \ket{b, x_1, \ldots, x_n} := \ket{b, x_1 \oplus b, \ldots, x_n \oplus b}
\]
for all $b, x_1, \ldots, x_n \in \{0,1\}$. 

This seemingly innocuous gate (which is included for free in almost every classical circuit model) turns out to be quite powerful. For starters, it is locally equivalent via conjugation by Hadamard gates to the quantum Parity gate \cite{moore:1999},
\[
\Parity_n \ket{b, x_1, \ldots, x_n} := \ket{b \oplus (x_1 \oplus \cdots \oplus x_n), x_1, \ldots, x_n},
\]
which is a duality that has no classical counterpart \cite{ajtai_parity}. Moreover, there are constant-depth quantum circuits with Fanout (and arbitrary single-qubit gates) for a wide variety of other symmetric Boolean operations such as And/Or and Majority \cite{hs_fanout, tt_fanout}. Perhaps most impressively, constant-depth quantum circuits with Fanout gates can factor integers with polynomial-time classical post-processing~\cite{hs_fanout}. 

Given the centrality of Fanout to the story of low-depth circuits with multi-qubit gates, there has been significant work in trying to understand if other multi-qubit gates are similarly powerful. Most notably, it is widely believed that the multi-qubit generalization of the Toffoli gate is fundamentally less powerful than the Fanout gate in constant depth, and there is long line of work giving evidence that these generalized Toffoli gates cannot compute Fanout \cite{bera_lb, lb_fanout, rosenthal, depth2, pauli_spec, anshu_dong_ou_yao:2024_QAC0}. In some sense, all of these results are grappling with a fundamental tension in the study of these low-depth circuit models---the high entanglement in the states produced by these circuits is an obstacle to proving lower bounds, but it is simultaneously unclear how one could leverage this complexity to implement a useful quantum algorithm. 

There is a surprising dearth of low-depth circuit models with multi-qubit gates that are as powerful as Fanout. One natural\footnote{This candidate looks considerably more natural after considering the analogous landscape of \emph{classical} circuits, which we discuss in \Cref{sec:classical_comparison}.} candidate for a gate that could be as powerful as Fanout is the quantum Threshold gate, a multi-qubit gate parameterized by some value $k \in \mathbb N$:
\[
    \Threshold^n_k \ket{b, x_1, \ldots, x_n} := \ket{b \oplus \mathbb I_{|x| \ge k}, x_1, \ldots, x_n}
\]
where $\mathbb I_{|x| \ge k}$ indicates if the Hamming weight of the input bit string $x = x_1 \cdots x_n \in \{0,1\}^n$ is at least the target value $k$. In fact, Høyer and Špalek asked almost 20 years ago about the power of Threshold in constant depth \cite{hs_fanout}: ``Can we simulate unbounded fan-out in constant depth using unbounded fan-in gates, e.g. $\operatorname{threshold}[t]$ or $\operatorname{exact}[t]$?" This question was reiterated more pointedly by Takahashi and Tani in 2011 \cite{tt_fanout}: ``Does there exist a fundamental gate that is as powerful as an unbounded fan-out gate?" 

We directly answer both of these questions in the affirmative by giving explicit constructions for Fanout using quantum Threshold gates:
\begin{theorem}
\label{thm:intro_main}
    There are poly-size constant-depth quantum circuits consisting of Threshold gates and arbitrary single-qubit gates that compute Fanout with high fidelity. Formally, $\BQTC^0 = \BQNC^0_{wf}$.
\end{theorem}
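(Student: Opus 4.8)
The plan is to collapse the whole statement onto one state-preparation task. Fanout is conjugate to Parity by Hadamards on every wire \cite{moore:1999}, and, by the equivalences of H{\o}yer and {\v S}palek \cite{hs_fanout}, computing Parity in constant depth is interreducible---using single-qubit gates, measurements, and simple classical control---with preparing the $n$-qubit cat state $\ket{\mathrm{cat}_n} := \tfrac{1}{\sqrt{2}}(\ket{0^n}+\ket{1^n})$ on a fresh register. The inclusion $\BQTC^0 \subseteq \BQNC^0_{wf}$ is the easy direction, since Fanout circuits already compute Majority and threshold/exact-weight predicates in constant depth \cite{hs_fanout,tt_fanout}, so each $\Threshold$ gate can be swapped for a constant-depth Fanout gadget. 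Everything therefore reduces to: a constant-depth, $\mathrm{poly}(n)$-size circuit of $\Threshold$ and single-qubit gates prepares $\ket{\mathrm{cat}_n}$ to high fidelity.

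\textbf{Generalizing Rosenthal's construction.} I would start from the construction of Rosenthal \cite{rosenthal}, who builds $\ket{\mathrm{cat}_n}$ with a constant-depth circuit of generalized Toffoli ($\AND$-type) gates of size $2^{O(n)}$, and isolate where the exponential size is spent. At a high level his circuit introduces a large pool of $\ket{+}$ ``seed'' qubits, applies a constant number of layers of wide gates that imprint onto the output register a bit correlated with the parity of (a growing subset of) the seeds, interleaves Hadamards on the seeds, and measures them; the output register is then $\ket{\mathrm{cat}_n}$ up to a Pauli correction fixed by the outcomes, with an infidelity that decays as the seed count grows. Because a generalized Toffoli gate only tests conjunctions, realizing the parity-correlated layers forces it to expand the relevant Hamming-weight predicates into their exponentially large $\OR$-of-$\AND$ forms; this is the sole source of the $2^{O(n)}$ size.

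\textbf{The $\Threshold$ speed-up and the general principle.} A $\Threshold$ gate evaluates a Hamming-weight predicate $\mathbb{I}_{|y|\ge k}$ directly, so the parity-correlated layers that cost Rosenthal exponentially many $\AND$ gates cost only $\mathrm{poly}(n)$ $\Threshold$ gates; dropping these into his skeleton layer-for-layer gives a $\mathrm{poly}(n)$-size circuit of the same constant depth, and enough (still $\mathrm{poly}(n)$) seeds keep the fidelity high. The only property of $\Threshold$ used is that it cheaply builds those layers, which asks only for a unitary that \emph{weakly approximates Parity}: for instance $\Maj_n = \Threshold^n_{\lceil n/2\rceil}$ has correlation $\Theta(n^{-1/2})$ with $\chi_{[n]}$, and a $1/\mathrm{poly}(n)$ correlation is plenty once the gate budget is $\mathrm{poly}(n)$. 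I would package this as a standalone lemma---any gate family that weakly approximates Parity can simulate Fanout in constant depth---so that \Cref{thm:intro_main} is its instantiation for $\Threshold$; this also accounts for the remark in the abstract about other candidate gates.

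\textbf{Main obstacle.} The hard part will be the fidelity bookkeeping: Rosenthal's analysis has to be re-run with the $\Threshold$-based gadgets in place, tracking how per-layer approximation errors compose and how the measurement-conditioned Pauli corrections propagate through the constant-depth circuit, and then verifying that $\mathrm{poly}(n)$ seeds suffice to reach the target fidelity. A secondary point is to confirm the substitution is genuinely depth-preserving---each exponentially wide $\OR$-of-$\AND$s layer becomes one layer of $\Threshold$ gates together with $O(1)$ layers of $\CNOT$s for the associated parities---and that no ``free'' classical postprocessing has been smuggled into a place the circuit model forbids.
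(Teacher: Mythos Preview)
Your high-level plan---generalize Rosenthal by exploiting that Threshold is a better ``weak Parity'' than Toffoli---is exactly the paper's, but your picture of Rosenthal's construction and of where its exponential cost comes from is wrong, and the ``layer-for-layer'' substitution you describe does not correspond to anything in the actual argument.

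Rosenthal's circuit is not measurement-based, has no seed qubits, and contains no $\OR$-of-$\AND$ expansion of Hamming-weight predicates. It is a purely unitary grid: qubits sit in an $n\times(m{+}1)$ array, each non-target column (initialized to $\ket{1^n}$) is hit with a reflection $R_{\psi_S}=I-2\ket{\psi_S}\bra{\psi_S}$ where $\ket{\psi_S}=H^{\otimes n}U_SH^{\otimes n}\ket{0^n}$, producing a ``noisy cat state'' on that column; then a Toffoli along each row aggregates the $m$ columns into an approximate nekomata on the target column. The only size parameter is $m$, and it is governed by $\gamma_1=|\braket{1^n}{\psi_S}|^2=|S|^2/2^{2n-2}$: one needs $m=\Theta(1/\gamma_1)$ columns so that the targets land on $\ket{1^n}$ with probability about $1/2$. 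For Rosenthal's Toffoli, $S=\{1^n\}$ so $|S|=1$ and $m=\Theta(2^{2n})$---that is the entire source of the exponential, not any predicate expansion. The paper's move is to take $U_S$ to be the Exact-$n/2$ indicator (built from two Threshold gates), so $S$ is the middle Hamming slice with $|S|=\binom{n}{n/2}\approx 2^n/\sqrt{n}$, giving $m=O(n)$ and error $O(1/\sqrt{n})$; the row Toffolis are themselves Threshold gates with $k=n$, so the whole circuit is $\QTC^0$. Two further corrections: the gate must indicate a \emph{parity-restricted} set (all elements of one parity), which $\Maj_n=\Threshold^n_{\lceil n/2\rceil}$ does not, so your Majority-correlation framing has to be replaced by Exact-$n/2$; and the cat-state-to-Parity reduction via measurements and classical control that you invoke is precisely the model the paper distinguishes itself from (see their Related Work on \cite{grilo:2024_qtc})---they instead use the fully unitary nekomata reduction of \Cref{approx-NEK-PAR-FAN}, which costs one extra generalized Toffoli (again a Threshold gate).
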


The construction from this theorem actually reveals a number of other gates that are as fundamentally powerful as the Fanout gate. As it turns out, the salient feature of Threshold for our purposes is that it can be used to construct a sort of ``weak'' Parity gate---a gate that only acts non-trivially on inputs of the same parity. 

Based on this idea, we introduce a class of multi-qubit phase gates that exhibit a generalization of this behavior. Formally, these gates are defined with respect to a set $S \subset \{0,1\}^n$ in the following way:
\[
    U_S \ket{x_1, \ldots, x_n} := (-1)^{\mathbb I_{x \in S}}\ket{x_1, \ldots, x_n}.
\]
Crucially, we restrict our attention to ``parity-restricted" sets $S$, that is, sets where all elements have the same parity (i.e., $x, y \in S \implies |x| \equiv |y| \pmod 2$). We show that these weak parity gates can be bootstrapped in constant depth into true Parity gates (which, recall, are locally-equivalent to Fanout) albeit with the help of a few generalized Toffoli gates:
\begin{theorem}
\label{thm:intro_general}
    Let $\{S_n\}_n$ be a family of parity-restricted sets with size $|S_n| = \Theta(2^n/\poly(n))$. There are poly-size constant-depth quantum circuits consisting of $U_{S_n}$ gates, generalized Toffoli gates, and arbitrary single-qubit gates that compute Fanout with high fidelity.
\end{theorem}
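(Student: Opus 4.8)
The plan is to reduce the task to implementing, with fidelity $1-\varepsilon$, a single gate locally equivalent to $\Fanout_n$ --- the parity gate $\Parity_n$ from the introduction, the equivalence being the Hadamard-conjugation observation of \cite{moore:1999} --- equivalently, to build a constant-depth circuit that writes $\chi(x):=x_1\oplus\dots\oplus x_n$ into a fresh qubit. Fix the family member $S=S_n$, let $\pi\in\{0,1\}$ be the common parity of its elements, and write $|S|=\delta 2^n$ with $\delta^{-1}=\poly(n)$. The structural heart is that the action of $\Parity_n$ is, off a vanishing fraction of inputs, a \emph{poly-width OR of translated $S$-membership predicates}: for every even-weight $z$ the translate $S\oplus z$ still lies in the parity class $\{x:|x|\equiv\pi\pmod 2\}$, a phase oracle for it is obtained for free by conjugating $U_S$ with the Pauli string $X^{z}$, and the union of all such translates is exactly that class. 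A first-moment argument --- take $T$ uniformly random even-weight shifts $z^{(1)},\dots,z^{(T)}$; a fixed point of the class is missed by all of them with probability $(1-2\delta)^T\le e^{-2\delta T}$; union-bound over the $2^{n-1}$ points --- shows that already $T=O\big(\delta^{-1}(n+\log\tfrac1\varepsilon)\big)=\poly(n)\cdot\mathrm{polylog}(1/\varepsilon)$ fixed shifts yield translates covering all but an $\varepsilon$-fraction of the class, so that (up to the global constant $\pi$, and for all but an $\varepsilon$-fraction of $x$)
\[
\chi(x)\;=\;\pi\;\oplus\;\bigvee_{j=1}^{T}\;\mathbb I_{\,x\oplus z^{(j)}\in S}\,.
\]
This is precisely where we generalize Rosenthal: his gate is the generalized Toffoli, i.e.\ $U_S$ for the singleton $S=\{1^n\}$, for which the identity still holds but $\delta=2^{-n}$ forces width $T=\Theta(2^n)$; the density hypothesis $\delta^{-1}=\poly(n)$ is exactly what collapses the width to $\poly(n)$.

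The hard part --- and the step I expect to be the main obstacle --- is assembling this width-$T$ disjunction in genuinely constant depth. Applying the $T$ conjugated copies of $U_S$ one after another costs depth $T$; and the obvious fix of first copying $\ket x$ into $T$ registers so the queries run in parallel is circular, because fanning a register out into $T$ copies \emph{is} a $\Fanout$ gate. Breaking this circularity is the technical core, and here I would adapt Rosenthal's ``nondeterministic'' gadget \cite{rosenthal}: introduce an ancilla ``witness'' register, prepare it in a suitable uniform superposition, apply a single layer of queries together with a few generalized Toffoli (multi-controlled) gates acting across the witness register, and then disentangle the witnesses --- the net effect being to realize the OR of all $T$ predicates while touching $\ket x$ only $O(1)$ times. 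The sole quantitative change from Rosenthal's setting is that, since only $T=\poly(n)$ predicates are combined, the witness register and the whole circuit have size $\poly(n)$ rather than $2^{\Theta(n)}$; the generalized Toffoli gates permitted in the theorem are used exactly here (and, if one insists on treating $U_S$ as a pure phase gate, for the single phase-to-bit conversions as well).

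A short error analysis finishes: the resulting gate agrees with $\Parity_n$ on a $(1-\varepsilon)$-fraction of computational-basis inputs and is benign on the rest, so its fidelity with $\Parity_n$ --- hence, after local Hadamard gates, with $\Fanout_n$ --- is $1-O(\varepsilon)$; taking $T=\poly(n)\cdot\mathrm{polylog}(1/\varepsilon)$ pushes $\varepsilon$ below any prescribed inverse polynomial while keeping size and depth within budget. (\Cref{thm:intro_main} falls out as the special case $S_n=\{x:|x|=\lfloor n/2\rfloor\}$, which is parity-restricted with $|S_n|=\binom{n}{\lfloor n/2\rfloor}=\Theta(2^n/\sqrt n)$ and whose membership predicate is a difference of two $\Threshold$ gates, the generalized Toffoli gates above being the instances $\Threshold^m_m$.)
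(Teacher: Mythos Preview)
Your covering reduction is a natural first move, but the proposal has a genuine gap at the step you yourself flag as the ``technical core'': realizing the $T$-ary OR $\bigvee_j \mathbb I_{x\oplus z^{(j)}\in S}$ in constant depth while touching the data register $\ket{x}$ only $O(1)$ times. The ``witness in superposition'' idea does not break the fanout circularity. First, the controlled shift $\ket{j}\ket{x}\mapsto\ket{j}\ket{x\oplus z^{(j)}}$ already requires each bit of the witness register to influence up to $n$ bits of $x$ in one layer, which is exactly unbounded fanout of the witness bits; with only Toffoli and $U_S$ gates this cannot be done in constant depth. Second, even granting that operation, a single phase-kickback produces $\frac{1}{\sqrt T}\sum_j(-1)^{f_j(x)}\ket{j}$, and reading off $\bigvee_j f_j(x)$ from this state is an unstructured-OR problem with bias $\Theta(\delta)=1/\poly(n)$, which cannot be solved with $O(1)$ further accesses. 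Finally, there is no ``nondeterministic gadget'' of this kind in Rosenthal's paper: his construction \emph{is} the nekomata route, and that is precisely what the paper generalizes.

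The paper's proof avoids the circularity by never computing parity of a given $\ket{x}$ directly. Instead it prepares, from all-zero ancillas, an approximate $n$-nekomata and then invokes the (Rosenthal) reduction from nekomata preparation to approximate $\Parity_n$. Concretely, it arranges qubits in an $n\times(m{+}1)$ grid, applies to each non-target column the reflection $R_{\psi_S}=\mathbb I-2\proj{\psi_S}$ with $\ket{\psi_S}=H^{\otimes n}U_S H^{\otimes n}\ket{0^n}$ (built from two $U_S$ gates and one generalized Toffoli), and then ANDs each row into the target column with a Toffoli. The parity restriction on $S$ is what makes $\ket{\psi_S}$ biased toward both $\ket{0^n}$ and $\ket{1^n}$; with $m=\Theta(2^{2n}/|S|^2)=\poly(n)$ columns the targets measure $\ket{0^n}$ or $\ket{1^n}$ each with probability $\tfrac12-O(|S|/2^n)$, giving an $O(|S|/2^n)$-approximate nekomata. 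Because the grid starts from product ancillas rather than from $\ket{x}$, no data fanout is ever needed.

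A secondary issue: even if your OR could be assembled, ``agrees with $\Parity_n$ on a $(1-\varepsilon)$-fraction of basis inputs'' does not yield operator-norm distance $O(\varepsilon)$ --- a single wrong input already gives distance $\Omega(1)$, and Hadamard conjugation spreads that error over all inputs of the would-be Fanout. This is fixable (take $T=O(\delta^{-1}n)$ shifts to cover \emph{every} point of the parity class), but the constant-depth OR obstacle above is not.
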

Since it is widely believed that multi-qubit Toffoli gates are not themselves sufficient to implement Fanout, the power of this construction likely derives from the weak parity gates. In fact, the reason these Toffoli gates were not required for \Cref{thm:intro_main} is due to the fact that Threshold can directly simulate Toffoli. In that vein, we also give conditions under which the $U_{S_n}$ gates alone suffice to simulate Parity; namely, when $|S_n| \ge 2^{n-O(1)}$ or $|S_n| \le 2^{(1-\epsilon)n}$. Though, the later condition will result in circuits of super-polynomial size.

While it has long been thought that Fanout/Parity gates were morally equivalent to other quantum modular arithmetic gates, those constructions seem to \emph{also} require these generalized Toffoli gates \cite{qacc}. By a careful inspection of the original construction presented in \cite{qacc} we find that generalized Toffoli gates are in fact not necessary. Formally, the quantum Mod-$p$ gates is defined as
\[
    \MOD_p^n \ket{b, x_1, \ldots, x_n} := \ket{b \oplus \Mod_p^n(x), x_1, \ldots, x_n},
\]
where $\Mod_p^n(x)$ is $1$ when $p$ divides the Hamming weight of $x = x_1, \cdots, x_n \in \{0,1\}^n$. For example, the Mod-$2$ gate is essentially the Parity gate (up to a single-qubit $X$ gate). It is implicit in \cite{qacc} that Fanout can be computed by a circuit consisting of Mod-$p$ gates and one- and two-qubit gates, yielding $\QNC^0_{wf} = \QNC^0[2] \subseteq \QNC^0[q]$ for all $q \geq 2$, but not necessarily that $\QNC^0[p] = \QNC^0[q]$ for distinct $p$ and $q$. The result they make explicit is that when Toffoli gates are allowed, any Mod-$q$ gate can be obtained using any other Mod-$p$ gate (by first implementing Fanout with Mod-$q$ gates and then computing Mod-$p$ with Fanout and generalized Toffoli gates). Concretely; $\QAC^0[p] = \QAC^0[q]$ for $p, q \geq 2$. Only later was it shown that generalized Toffoli gates can be implemented using Fanout and single- and two-qubit gates i.e. that $\QNC^0_{wf} = \QAC^0_{wf}$ \cite{hs_fanout, tt_fanout}. In light of these results we observe the following:
\begin{theorem}
For all $p, q \geq 2$, there are poly-size constant-depth quantum circuits consisting of Mod-$p$ gates and  single-qubit gates that compute the Mod-$q$ operation. Formally, $\QNC^0[p] = \QNC^0[q]$.
\end{theorem}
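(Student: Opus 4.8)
The plan is to prove the sharper statement that $\QNC^0[q] = \QNC^0_{wf}$ for \emph{every} integer $q \geq 2$; the theorem then follows immediately, since all of the classes $\QNC^0[p]$ collapse onto the single class $\QNC^0_{wf}$. So I need two inclusions. The inclusion $\QNC^0_{wf} \subseteq \QNC^0[q]$ is the construction of a Fanout gate from Mod-$q$ gates and one- and two-qubit gates implicit in \cite{qacc} (as recorded in the discussion above). For the reverse inclusion $\QNC^0[q] \subseteq \QNC^0_{wf}$, it suffices to exhibit a single poly-size, constant-depth circuit of Fanout gates and one- and two-qubit gates — with every ancilla returned to $\ket 0$ — that realizes one $\MOD_q^n$ gate: then each Mod-$q$ gate in a $\QNC^0[q]$ circuit is replaced by a fresh copy of this gadget on private ancillas, and since the Mod-$q$ gates in any one layer act on disjoint registers their gadgets run in parallel, so a constant number of layers stays constant depth and a polynomial number of polynomial-size substitutions stays polynomial size.

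To build the $\MOD_q^n$ gadget I would chain two already-established constructions. First, \cite{qacc} makes \emph{explicit} a constant-depth, poly-size circuit computing $\MOD_q^n$ from Fanout gates together with generalized Toffoli gates (plus single-qubit gates). Second, \cite{hs_fanout, tt_fanout} show that generalized Toffoli gates already lie in $\QNC^0_{wf}$, i.e. $\QAC^0_{wf} = \QNC^0_{wf}$. Replacing each generalized Toffoli in the first circuit by its constant-depth Fanout implementation and composing the constantly many constant-depth stages yields $\MOD_q^n \in \QNC^0_{wf}$, hence $\QNC^0[q] \subseteq \QNC^0_{wf}$; together with $\QNC^0_{wf} \subseteq \QNC^0[q]$ this gives $\QNC^0[q] = \QNC^0_{wf}$ for every $q \geq 2$, so in particular $\QNC^0[p] = \QNC^0_{wf} = \QNC^0[q]$.

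The genuine content — and the reason the equality $\QNC^0[p] = \QNC^0[q]$ was not already on record — is exactly the inclusion $\QNC^0_{wf} \subseteq \QNC^0[q]$: \cite{qacc} only advertises the Toffoli-assisted $\QAC^0[p] = \QAC^0[q]$, so the hard part is reopening their construction and verifying that the Fanout-from-Mod-$q$ sub-circuit uses nothing beyond Mod-$q$ gates and one- and two-qubit gates, with no multi-qubit Toffoli concealed inside it. Granting that, everything else is bookkeeping: checking that each sub-circuit invoked is clean (ancillas in, ancillas out, no stray global phase) so that the substitutions compose; tracking the constant number of constant-depth compositions and the polynomial size; and noting that nothing uses primality, so the result holds for all integers $p, q \geq 2$. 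Finally, two-qubit gates cost nothing in this model: a one-input Mod-$q$ gate is a $0$-controlled $\NOT$, so a $\CNOT$ — and hence every two-qubit gate — is already available from Mod-$q$ and single-qubit gates alone, matching the statement's restriction to single-qubit gates.
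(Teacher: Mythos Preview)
Your proposal is correct and matches the paper's approach exactly: both reduce the theorem to showing $\QNC^0[q] = \QNC^0_{wf}$ for every $q \geq 2$, with the substantive direction being $\QNC^0_{wf} \subseteq \QNC^0[q]$ via the Fanout-from-Mod-$q$ construction of \cite{qacc}, and the reverse direction following from \cite{hs_fanout, tt_fanout}. The paper's Section~4 is precisely the detailed verification you flag as the genuine content---it re-presents the \cite{qacc} encoding of qudits by qubits and confirms explicitly that the simulation of $\M_{n,p}$ uses only $\MOD_p$ gates and constant-size gates, with no generalized Toffoli hidden inside---so your outline is the paper's proof with only that check left to be written out.
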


\subsection{Comparison to the classical setting}
\label{sec:classical_comparison}

Our focus on shallow circuits draws considerable inspiration from the analogous study of classical constant-depth circuit classes with large fan-in gates, which has been hugely influential in classical complexity theory. For instance, initial work in Boolean circuits saw the development of techniques for proving unconditional lower bounds such as random restrictions \cite{ajtai_parity, fss_ac0, yao_ph, hastad_thesis}, Fourier analytic methods \cite{lmn}, and polynomial methods \cite{razborov, smolensky}.

So how do we compare the quantum and classical settings? And what does this comparison tell us about the power of quantum circuits in constant depth? To start, classical circuits classes (e.g., $\NC^0$, $\AC^0$, $\TC^0$, \ldots) typically assume that the output of any gate can be used  as input for any number of other other gates (i.e., a gate's output can be ``fanned out'' to other gates). Of course, this is exactly the kind of fanout that immediately becomes so powerful when given to a constant-depth quantum circuit.

In fact, because of this fanout, the classical Threshold gate reigns supreme amongst similar classical circuit complexity classes. This is due to the fact that constant-depth classical circuits with Fanout and Threshold can compute any Boolean function where the output depends only on the Hamming weight of the input.\footnote{To see this, first notice that for any $k$, there is a constant-depth circuit with two Threshold gates that computes whether or not the input has Hamming weight exactly $k$. Since any symmetric Boolean function can be expressed as a disjunction over these ``exact-$k$'' clauses, the claim immediately follows due to the fact that a threshold of $1$ is equivalent to the Or function.} Formally, the complexity class $\TC^0$, which contains all languages computed by constant-depth classical circuits with Threshold, contains all other similarly defined classical circuit classes with other large fan-in gates: $\NC^0[p]$, $\AC^0$, $\AC^0[p]$, and $\ACC$.\footnote{See \Cref{sec:circuit_basics} for precise definitions.}  In many cases, Threshold is provably more powerful, e.g., $\AC^0 \subsetneq \TC^0$ \cite{ajtai_parity, fss_ac0} and $\AC^0[p] \subsetneq \TC^0$ \cite{razborov, smolensky}.

This is why the Threshold gate was a tantalizing target for quantum exploration. Prior to our work, it was \emph{not} known whether the quantum version of $\TC^0$---i.e., $\BQTC^0$---was as powerful as the quantum versions of the other classical complexity classes. In fact, given the surprising power of Fanout in the quantum world, the exact opposite was known: $\BQNC^0_{wf} \supseteq \BQTC^0$ \cite{hs_fanout}. That is, constant-depth quantum circuits with Fanout could simulate constant-depth circuits with Threshold. Our work restores order to the usually classical hierarchy, placing Threshold alongside Fanout as one of the most powerful quantum gates in constant depth: $\BQTC^0 = \BQNC^0_{wf}$.

\subsection{Proof techniques and overview}

The constructions in \Cref{thm:intro_main} and \Cref{thm:intro_general} follow a general outline pioneered by Rosenthal~\cite{rosenthal}. There, it is shown that constant-depth quantum circuits can compute Fanout using generalized Toffoli gates \emph{provided exponential-sized circuits are allowed.} While not phrased in this language, Rosenthal's construction shows a proof-of-principle technique for taking a very ``weak'' Parity gate (indeed, Toffoli non-trivially computes Parity for exactly one input!) and boosting it to a full Parity gate. We show that when we start with a gate (like Threshold) which is closer to Parity, this construction can be altered to yield circuits of polynomial size.

The proof goes in two steps. First, define a certain cat-like state called a ``nekomata'' \cite{rosenthal}:
\[
\frac{\ket{0^n}\otimes \ket{\psi_0} + \ket{1^n}\otimes \ket{\psi_1}}{\sqrt{2}}
\]
where $\ket{\psi_0}$ and $\ket{\psi_1}$ are arbitrary states. Following a similar idea to that of Green et al.\  \cite{qacc}, such states can be used to compute Parity in constant depth using the the relative phase between the $\ket{0^n}$ and $\ket{1^n}$ part of the state.
    
Second, show there is an explicit constant-depth construction for a nekomata state. Here, we show the key ingredient is the ability to create a ``noisy'' version of a usual cat state, where the all-zeroes and all-ones outcomes have noticeably larger amplitudes than those on the other outcomes. Threshold gates are significantly better at this task than the Toffoli gates in Rosenthal's original construction. Finally, these states can be combined together (using Toffoli or Threshold gates) to form a high-fidelity nekomata state, completing the construction.

\subsection{Related work\label{PRIOR}}

Our work shares some similarity to that of \cite{grilo:2024_qtc}, where the authors explore quantum advantage with constant-depth quantum circuits. They also make a similar claim suggesting that $\QTC^0 = \QNC^0_{wf}$, but crucially, their results hold in a circuit model with intermediate measurements and classical fanout. The classical fanout in their circuit model allows them to bootstrap the poor man's cat state construction of Bene Watts et al.\ \cite{watts2019exponential} to construct an actual cat state, an idea that was also explored in \cite{buhrman2023state}. To be clear, our circuit model and definition of $\BQTC^0$ follows in a traditional line of work (e.g, \cite{moore:1999, qacc, hs_fanout, tt_fanout, depth2, rosenthal, pauli_spec}), where no such intermediate measurements or classical fanout is allowed. Therefore, we must use entirely different techniques.

\subsection{Future directions}
One immediate open question left open by our work is whether the approximation error inherent in the construction used to prove \Cref{thm:intro_main} can be eliminated without incurring a size or depth blow-up.
More generally, we ask which other conditions on a family of multi-qubit gates lead to powerful shallow circuits. One explicit approach would be to ask what properties of the sets $S$ parameterizing our phase gates $U_S$ are sufficient to compute Fanout. Is there something beyond being parity restricted?

Another interesting question concerns the circuit complexity of restricted families of threshold functions. Specifically, consider the Exact-$k$ gate, which indicates if the Hamming weight of the input is exactly $k$. Notice that Exact-$k$ can be constructed from two Threshold gates. Moreover, for $k \approx n/2$, Exact-$k$ can be used to compute Threshold. This latter statement is not obvious and follows from the fact that our proof of \Cref{thm:intro_main} actually uses Exact gates rather than Threshold gates. However, for other values of $k \ll n/2$, it is not simple to see how Exact-$k$ could be used to simulate Exact-$(k+1)$.

\section{Preliminaries}
We will now introduce the different types of entangling gates considered in this work, the types of circuits constructed from them, and the complexity classes to which they roughly correspond. 
\subsection{Multiqubit Gates}
A simple multiqubit gate is the $\CNOT$ gate which acts on two qubits, flipping the target conditioned on the control, i.e.,
\begin{align*}
    \CNOT\ket{x_1, x_2} = \ket{x_1, x_1 \oplus x_2}
\end{align*}
Any two-qubit gate can be constructed from constantly many single-qubit gates and $\CNOT$ gates. A circuit consisting entirely of arbitrary single- and two-qubit gates is said to be a $\QNC$ circuit.

Another multi-qubit gate of interest is the Toffoli gate, which acts on three qubits by flipping the last qubit controlled on the first two, i.e.,
\begin{align*}
    \text{Tof}\ket{x, y, z} = \ket{x, y, (x \land y) \oplus z}
\end{align*}
This gate can be seen as a $\CNOT$ gate with an additional control qubit. In fact, we call the analogous unitary on $n > 1$ qubits a \emph{generalized} Toffoli gate:
\begin{definition}\label{definition:gen-toffoli}
    The generalized Toffoli gate $\land_n$ acts on $n + 1$ qubits by computing the $\AND$ of the first $n$ bits in superposition. For all $x_1, x_2, \dots x_n, b \in \{0, 1\}$ the $\wedge_n$-gate acts as
    \begin{align*}
    \land_n\ket{x_1, x_2, \dots x_n, b} = \ket{x_1, x_2, \dots x_n, (x_1 \land \cdots \land x_n) \oplus b}
\end{align*}
\end{definition}
Circuits composed of arbitrary single-qubit gates and generalized Toffoli gates are referred to as $\QAC$ circuits.
\begin{definition}\label{definition:threshold-gate}
    For $k \in \{0, 1 \dots n\}$ and $x_1, x_2, \dots x_n, b \in \{0, 1\}$ the unitary $\Threshold_{n, k}$ acts as
    \begin{align*}
        \Threshold_{n, k}\ket{b}\ket{x} = \ket{b \oplus \mathbb{I}_{|x| \geq k}}\ket{x}
    \end{align*}
\end{definition}
Circuits composed of arbitrary single-qubit gates and threshold gates\footnote{Recall that a function $f\colon \{0, 1\}^n \to \{0, 1\}$ is said to be a \emph{threshold function} if it  can be written as
\begin{align*}
    f(x) = \begin{cases} 
      1 & \sum_{i = 1}^n w_ix_i \geq b \\
      0 & \text{otherwise} \end{cases}
\end{align*}
for some $w_1, \dots w_n, b \in \mathbb{R}$. For our purposes it suffices to only consider threshold functions in which $w_i = 1$ for all $i \in [n]$.} are said to be $\QTC$ circuits. Note that by taking $k = n$ we recover the generalized Toffoli gate, and in this sense the generalized Toffoli gate \emph{is} a Threshold gate, so including this gate in the allowed gate-set for $\QTC$ circuits would be redundant.

Let $U_f$ be the unitary which computes some boolean function $f: \{0, 1\}^n \to \{0, 1\}$ in superposition i.e. for all $x \in \{0, 1\}^n$ and $b \in \{0, 1\}$ $U_f\ket{x, b} = \ket{x, b \oplus f(x)}$. Note that all multiqubit gates discussed thus far fall into this category. Now, observe that when the target qubit is replaced with $\ket{-} = \frac{\ket{0} - \ket{1}}{\sqrt{2}}$, we can ``compute $f$ in the phase":
\begin{align*}
    U_f\ket{-}\ket{x} = (-1)^{f(x)}\ket{-}\ket{x}
\end{align*}
So, given $U_f$ we can with a single single ancilla implement $V_f$ which acts as $V_f\ket{x} = (-1)^{f(x)}\ket{x}$. While going from $U_f$ to $V_f$ is not a difficult task the converse could in general be quite non-trivial.\footnote{For instance, take $Z^{\otimes n}$; this gate computes parity in the phase as $Z^{\otimes n}\ket{x} = (-1)^{|x|}\ket{x}$, but it is unclear if there is a simple way to recover the usual parity gate: $\Parity_n$.}

As mentioned, the \textit{quantum Fanout gate} gives us some way of ``copying" a given qubit and $\XOR$-ing it onto an unbounded number of qubits. 
\begin{definition}\label{definition:fanout} 
For all $x_1, x_2, \dots x_n, b \in \{0, 1\}$ the Fanout unitary, $\Fanout_n$, acts as
\begin{align*}
    \Fanout_n\ket{b}\ket{x_1, x_2, \dots x_n} = \ket{b \oplus x_1, b\oplus x_2, \dots b\oplus x_n}
\end{align*}
\end{definition}
We will refer to circuits constructed from one- and two-qubit and Fanout gates as $\QNC_{wf}$ circuits.

Another important class of gates are so-called $\MOD$ gates:
\begin{definition}
    For a given $m \in \mathbb{N}$ and all $x_1, x_2, \dots x_n, b \in \{0, 1\}$ the $\MOD_{n, m}$ gate acts as
    \begin{align*}
        \MOD_{n, m}\ket{x_1, x_2, \dots x_n}\ket{b} = \ket{x_1, x_2, \dots x_n}\ket{\Mod_{n, m}(x)\oplus b}
    \end{align*}
    Where $\Mod_{n, m}(x) = 1$ iff $|x|$ is divisible by $m$. Further, for $\ell \in \{0, 1, \dots m - 1\}$ we use $\Mod_{n, m, \ell}(x)$ to denote the function which is $1$ iff $|x|\equiv \ell\;(\mathrm{mod}\; m)$ and the corresponding quantum gate accordingly:
    \begin{align*}
        \MOD_{n, m, \ell}\ket{b}\ket{x_1, x_2, \dots x_n} = \ket{\Mod_{n, m, \ell}(x)\oplus b}\ket{x_1, x_2, \dots x_n}.
    \end{align*}
\end{definition}
Note that when $m = 2$ the $\MOD_{n, 2, 1}$ gate is equivalent to the parity gate $\Parity_n$. When a circuit consists of one- and two-qubit gates and $\MOD_{n, m}$ gates for a fixed $m$ it is called a $\QNC[m]$ circuit and when the circuit also contains generalized Toffoli gates it is referred to as a $\QAC[m]$ circuit.

The final class of gates we will define are what we call ``parity-restricted" gates which have not previously appeared in the literature. A set of bit strings $S \subseteq \{0, 1\}^n$ is said to be \emph{parity restricted} if $|s_1| \equiv |s_2| \pmod{2}$ for all $s_1, s_2 \in S$.
\begin{definition}\label{definition:parity_restricted_gates}
    A unitary $U_S$ acting on $n$-qubits is said to be a parity-restricted gate if for all $x\in\{0, 1\}^n$ 
    \begin{align*}
        U_S\ket{x} = (-1)^{\mathbb{I}_S(x)}\ket{x}
    \end{align*}
    for some parity-restricted set $S \subseteq \{0, 1\}^n$.
\end{definition}
A circuit composed of arbitrary one- and two-qubit gates and $U_S$ gates for some parity restricted set $S$ is said to be a $\QNC_S$ circuit. Similarly, if the circuit also consists of generalized Toffoli gates the circuit is said to be a $\QAC_S$ circuit. 

Finally, we will define the primary complexity measures for quantum circuits.
\begin{definition}\label{definition:depth}
    A quantum circuit $C$ is said to have \emph{depth} $d$ if $C$ can be decomposed as a sequence $M_dS_d \cdots M_2S_2M_1S_1$ where each $S_i$ consists entirely of single-qubit gates and $M_i$ consists of non-overlapping multi-qubit gates (i.e., every pair of gates in $M_i$ operate on disjoint sets of qubits).
\end{definition}
\begin{definition}
    A quantum circuit $C$ has \emph{size} $s$ if $C$ has exactly $s$ multi-qubit gates.
\end{definition}

\subsection{Quantum Circuit Complexity Classes}
In this section we will define the relevant quantum circuit classes, but before doing that we must introduce the notion of a circuit family and what it means for a circuit family to compute a Boolean function. 
\begin{definition}
    A family of quantum circuits is a collection $\mathcal{C} = \{C_n\}_{n \geq 1}$ where $C_n$ acts on $n + a(n)$ qubits where $a(n)$ is some computable function.
\end{definition}
This definition of a circuit family is analogous to the classical notion of a $\emph{non-uniform}$ circuit family since there need not be any relation between circuits for different sizes (e.g., it is not necessary for there to exist a Turing machine which outputs a description of $C_n$ on input $1^n$. Such a requirement is only for \emph{uniform} circuit families). It should be noted that all constructions presented in this work correspond to uniform circuit families nonetheless.

\begin{definition}\label{definition:circuit-comp}
For a given language $L \subseteq \{0, 1\}^*$ we say that a family of quantum circuits $\{C_n\}_{n \geq 1}$ each acting on $n + a(n)$ qubits exactly computes $L$ if for all $n \geq 1$ and $x \in \{0, 1\}^n$ measuring the last qubit of $C_n\ket{x}\ket{0^{a(n)}}$ in the computational basis yields 
\begin{itemize}
    \item $\ket{1}$ with certainty if $x \in L$
    \item $\ket{0}$ with certainty if $x \not\in L$
\end{itemize}
\end{definition}
Now, for the complexity classes of interest:
\begin{itemize}
    \item $\QNC^i$ is the class of problems decidable by $\QNC$ circuits which act on polynomially-many qubits (i.e. $n + a(n)$ is bounded by some polynomial in $n$), have polynomial size and depth $O(\log^i(n))$.
    \item $\QAC^i$ is the class of problems decidable by $\QAC$ circuits which act on polynomially-many qubits, have polynomial size and depth $O(\log^i(n))$.
    \item $\QTC^i$ is the class of problems decidable by $\QTC$ circuits which act on polynomially-many qubits, have polynomial size and depth $O(\log^i(n))$.
    \item $\QNC^i_{wf}$ is the class of problems decidable by $\QNC_{wf}$ circuits which act on polynomially-many qubits, have polynomial size and depth $O(\log^i(n))$.
\end{itemize}

The primary focus of this work will be constant depth circuits, which correspond to $i = 0$ in the above definitions i.e. the classes $\QNC^0$, $\QAC^0$, $\QTC^0$, and $\QNC_{wf}^0$. In a slight abuse of notation we may call a family of circuits a $\mathcal{C}$-circuit family if the family satisfies the necessary conditions for circuits which compute languages in $\mathcal{C}$ for some circuit class $\mathcal{C}$, though the unitaries which these circuits compute may not actually correspond to a Boolean function. For instance if $\{C_n\}_{n \geq 1}$ is a family of constant-depth, polynomial-size $\QAC$ circuits which act on polynomially many qubits we may refer to them simply as a family of $\QAC^0$ circuits. 

\begin{proposition}[Proposition 3.1 of \cite{qacc}]\label{EQUIV}
    The following tasks are equivalent for constant-depth circuits consisting of $\wedge_n$-gates and single-qubit gates:
    \begin{enumerate}
        \item Preparing the state $\frac{\ket{0^n} + \ket{1^n}}{\sqrt{2}}$ from $\ket{0^n}$ and performing the inverse transformation.
        \item Applying Fanout $\Fanout_n$.
        \item Applying Parity $\Parity_n$.
    \end{enumerate}
    In other words, these tasks are equivalent under $\QAC^0$ reductions.
\end{proposition}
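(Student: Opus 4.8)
The plan is to prove the three tasks equivalent by exhibiting a cycle of constant-depth $\QAC^0$ reductions; in the numbering of the statement (Parity, Fanout, cat-state preparation) I would show $(3)\Rightarrow(2)\Rightarrow(1)\Rightarrow(3)$.

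Two of the legs are short and cost only a single layer of depth. For $(2)\Leftrightarrow(3)$ I would use that $\Fanout_n$ is a product of $\CNOT$s all sharing the control wire $b$, with the $x_i$ as targets; conjugating by $H^{\otimes(n+1)}$ distributes over this product (because $H^{\otimes(n+1)}$ is an involution, so identities can be inserted between the factors), and the identity $(H\otimes H)\,\CNOT\,(H\otimes H)$ swaps control and target on each factor, so the conjugated gate is the product of $\CNOT$s from each $x_i$ into $b$, which is exactly $\Parity_n$. Hence $\Parity_n = H^{\otimes(n+1)}\,\Fanout_n\,H^{\otimes(n+1)}$ and conversely, giving $(3)\Rightarrow(2)$. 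For $(2)\Rightarrow(1)$: applying $H$ to one wire of $\ket{0^n}$ and then $\Fanout_{n-1}$ with that wire as control produces $\tfrac1{\sqrt2}(\ket{0^n}+\ket{1^n})$, and running the same circuit in reverse performs the inverse transformation.

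The substance is $(1)\Rightarrow(3)$: given a constant-depth circuit $C$ preparing $\tfrac1{\sqrt2}(\ket{0^n}+\ket{1^n})$ from $\ket{0^n}$ together with its inverse $C^{-1}$, build $\Parity_n$. The idea is to encode the parity of the input into the \emph{relative phase} of a fresh cat state and then extract it by un-preparing. On input wires $x_1,\dots,x_n$ and target $b$: (i) run $C$ on $n$ fresh ancilla wires $a_1,\dots,a_n$ to obtain $\tfrac1{\sqrt2}(\ket{0^n}+\ket{1^n})_a$; (ii) in one depth layer apply $\CZ$ between $x_i$ and $a_i$ for every $i$, which multiplies the $\ket{1^n}_a$ branch by $\prod_i(-1)^{x_i}=(-1)^{|x|}$, leaving the ancillas in $\tfrac1{\sqrt2}(\ket{0^n}+(-1)^{|x|}\ket{1^n})$; (iii) run $C^{-1}$ on the ancillas. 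The key observation is that $C^{-1}$ maps $\tfrac1{\sqrt2}(\ket{0^n}+\ket{1^n})\mapsto\ket{0^n}$, so by unitarity it sends the orthogonal vector $\tfrac1{\sqrt2}(\ket{0^n}-\ket{1^n})$ to \emph{some} state orthogonal to $\ket{0^n}$, i.e. one supported entirely on nonzero strings — and this holds no matter what $C$ is. Thus after (iii) the ancilla register is exactly $\ket{0^n}$ when $|x|$ is even and a superposition over nonzero strings when $|x|$ is odd, so XORing the $\OR$ of the ancilla wires onto $b$ — which is one $\wedge_n$ gate conjugated by $X$s, since $\OR(a)=1\oplus\wedge_n(\bar a_1,\dots,\bar a_n)$ — sets $b\mapsto b\oplus(|x|\bmod2)$. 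Finally (iv) uncompute the ancillas by running steps (iii), (ii), (i) in reverse: $C$ restores $\tfrac1{\sqrt2}(\ket{0^n}+(-1)^{|x|}\ket{1^n})$, the $\CZ$ layer cancels the phase, and $C^{-1}$ returns the ancillas to $\ket{0^n}$. The net map on the data wires is exactly $\Parity_n$, with all ancillas returned to $\ket{0^n}$, and the circuit has constant depth, polynomial size, and uses only $C$, $C^{-1}$, generalized Toffoli gates, and $1$- and $2$-qubit gates — a $\QAC^0$ reduction.

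The hard part is $(1)\Rightarrow(3)$, specifically the realization that one should manipulate the cat state's \emph{phase} with $\CZ$s (rather than try to ``collapse'' the cat onto the control, which does not seem achievable in constant depth) and then invert $C$ to turn that phase into a computational-basis flag that an $\OR$/$\wedge_n$ gate can read off. The remaining care is bookkeeping: if $C$ uses its own workspace wires, the $\OR$ in step (iii) must be taken over \emph{all} of $C$'s wires (the cat wires together with the workspace), so that the flag is correctly $1$ in the odd case and $0$ in the even case; and one must write the whole construction coherently, so that it applies $\Parity_n$ on arbitrary superpositions of $x$ and leaves every ancilla exactly in $\ket{0^n}$, which is what makes the reduction composable.
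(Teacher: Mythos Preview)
Your proposal is correct and follows essentially the same approach as the paper: the Hadamard conjugation for $(2)\Leftrightarrow(3)$, the trivial $(2)\Rightarrow(1)$, and for $(1)\Rightarrow(3)$ the construction of imprinting the parity as a relative phase on a freshly prepared cat state via a layer of $\CZ$ gates, un-preparing with $C^{-1}$, reading the result with an $\OR$ ($\wedge_n$-based) gate, and uncomputing. This is exactly the construction depicted in the paper's \Cref{fig:nekomata_to_parity} (specialized from nekomata to the cat state), and your remarks about taking the $\OR$ over all of $C$'s wires and about coherent uncomputation match the paper's treatment.
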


Critical to our construction is the fact that (1) in the above proposition can be relaxed to a more general state preparation task. To see how, we must define a class of a ``cat-like" states, first introduced by Rosenthal \cite{rosenthal} which he calls \emph{nekomata}:
\begin{definition}\label{NEKOMATA}
    A state $\ket{\phi}$ on $n + m$ qubits is said to be an $n$-\emph{nekomata} if there exists some ordering of the qubits such that 
    \begin{align*}
        \ket{\phi} = \frac{\ket{0^n}\otimes \ket{\psi_0} + \ket{1^n}\otimes \ket{\psi_1}}{\sqrt{2}}
    \end{align*}
    where $\ket{\psi_0}$ and $\ket{\psi_1}$ are arbitrary $m$-qubit states. The first $n$ qubits of this state are referred to as the target qubits.
\end{definition}
As mentioned, \Cref{EQUIV} is still true when the cat state in task 1 is replaced with any $n$-nekomata (see \Cref{PROOFS} for more details). This fact is quite powerful since we only need to design a circuit which produces a state on which some subsystem is ``cat-like" in order to compute parity. This makes the prospect of designing a circuit to compute parity far less daunting.  
\label{sec:circuit_basics}

\subsection{Approximate Quantum Circuits}\label{APPROX}
Proposition \ref{EQUIV} shows that exactly preparing a cat state is in fact computationally equivalent to exactly computing parity, up to some $\QAC^0$ computations and this can further be generalized by relaxing the task of preparing a nekomata state. 
Further, it is established in \cite{rosenthal} that preparing an approximate nekomata state is sufficient to \emph{approximately} compute parity or fanout. This notion is made precise below. 
\begin{definition}\label{ANEKOMATA}
    For $\epsilon \in [0, 1]$ a state $\ket{\phi}$ on $n + m$ qubits is said to be an $\epsilon$-approximate nekomata if there exists some nekomata $\ket{\nu}$ such that $\left|\braket{\nu}{\phi}\right|^2 \geq 1 - \epsilon$.
\end{definition}
When we refer to a quantum circuit as approximately computing some function or approximating a given unitary we mean that the circuit, $C$, and the ideal unitary $U$ have small distance. Explicitly, for $\epsilon \in (0, 1)$ we say that $C$ is an $\epsilon$-approximate implementation of $U$ or that $C$ computes $U$ with approximation error $\epsilon$ if $\| U -  C\|_{\mathrm{op}} \leq \epsilon$ where $\|\cdot\|_{\mathrm{op}}$ denotes the operator norm. 

A statement analogous to \Cref{EQUIV} holds for the approximate version of each task:
\begin{lemma}[Theorem 3.1 of \cite{rosenthal}] \label{approx-NEK-PAR-FAN}
    For any $\epsilon \in (0, 1)$ the following tasks are equivalent under $\QAC^0$ reductions:
    \begin{itemize}
        \item Preparation of $O(\epsilon)$-approximate nekomata from the all zeros state and the inverse transformation
        \item Approximately computing Parity with error $O(\epsilon)$
        \item Approximately computing Fanout with error $O(\epsilon)$
    \end{itemize}
\end{lemma}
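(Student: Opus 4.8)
The plan is to reduce the three-way equivalence to two pairwise equivalences: (i) approximately computing $\Parity_n$ versus approximately computing $\Fanout_n$, and (ii) preparing an $O(\epsilon)$-approximate nekomata together with its inverse versus approximately computing $\Parity_n$. All three tasks are then pairwise equivalent. For both I would invoke the corresponding \emph{exact} statements --- \Cref{EQUIV} and the version of it in which the cat state is replaced by an arbitrary nekomata (the content deferred to \Cref{PROOFS}) --- as black boxes, so that the only genuinely new work is bounding how the approximation parameter degrades as it passes through the fixed, constant-depth $\QAC^0$ reductions those results supply.

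Two of the needed implications should be essentially routine. For $\Parity_n \leftrightarrow \Fanout_n$: Moore's identity gives $\Fanout_n = H^{\otimes(n+1)}\Parity_n H^{\otimes(n+1)}$, so conjugating any circuit by a single layer of Hadamards on each side converts an $\epsilon$-approximate implementation of one gate into an $\epsilon$-approximate implementation of the other; since the operator norm is invariant under conjugation by a fixed unitary the error is preserved \emph{exactly}, and the reduction costs $O(1)$ extra depth and no multi-qubit gates (and the identity is symmetric). For $\Parity_n \Rightarrow$ nekomata preparation: \Cref{EQUIV} supplies a fixed $\QAC^0$ circuit $S$, built from $1$- and $2$-qubit gates together with only $O(1)$ exact $\Parity_n$ gates, that maps $\ket{0}$ to a genuine cat state (and whose inverse undoes this). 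Replacing each $\Parity_n$ gate in $S$ by an $\epsilon$-approximate implementation $C$ and telescoping yields a $\QAC^0$ circuit $\widetilde S$ with $\|\widetilde S - S\|_{\mathrm{op}} = O(\epsilon)$, so $\widetilde S\ket{0}$ lies within Euclidean distance $O(\epsilon)$ of a true cat state --- hence is an $O(\epsilon)$-approximate nekomata, since a cat state is a nekomata with trivial junk (\Cref{NEKOMATA}) --- and $\widetilde S^\dagger$ is the required inverse transformation.

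The direction ``approximate nekomata $\Rightarrow$ approximate $\Parity_n$'' is where I expect the real work. Suppose a $\QAC^0$ circuit $W$ (used forward and inverse) prepares a state that is $O(\sqrt\epsilon)$-close, up to a global phase, to a genuine $n$-nekomata $\ket\nu$; this Euclidean bound follows from the fidelity condition of \Cref{ANEKOMATA}. For the purposes of analysis only I would ``round'' $W$ to an \emph{exact} nekomata-preparing unitary: there is a unitary $R$ acting as the identity off the two-dimensional span of $W\ket{0}$ and $\ket\nu$ with $R\,W\ket{0} = \ket\nu$ and $\|R - I\|_{\mathrm{op}} = O(\sqrt\epsilon)$, so the phantom unitary $\widetilde W := RW$ satisfies $\widetilde W\ket{0} = \ket\nu$ and $\|\widetilde W - W\|_{\mathrm{op}} = \|(R-I)W\|_{\mathrm{op}} = \|R-I\|_{\mathrm{op}} = O(\sqrt\epsilon)$. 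By the exact nekomata version of \Cref{EQUIV} there is a fixed $\QAC^0$ reduction $\mathcal R[\cdot]$ --- composing $O(1)$ copies of its argument and that argument's inverse with fixed gadgets, notably a depth-one layer of $\CZ$ gates controlled by the input bits that imprints the phase $(-1)^{|x|}$ on the $n$ \emph{target} qubits of the nekomata while leaving the junk registers untouched --- such that $\mathcal R[\widetilde W]$ exactly implements $\Parity_n$ (returning all ancillas to $\ket 0$). The circuit we actually build is $\mathcal R[W]$, which uses $W^{\pm 1}$ only $O(1)$ times, so telescoping gives $\|\mathcal R[W] - \mathcal R[\widetilde W]\|_{\mathrm{op}} = O(\sqrt\epsilon)$, whence $\mathcal R[W]$ is an $O(\sqrt\epsilon)$-approximate implementation of $\Parity_n$.

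The main obstacle, and the step I would write most carefully, is exactly this error bookkeeping at the seam between the two notions of approximation in play: ``approximate nekomata'' is defined through fidelity (\Cref{ANEKOMATA}) whereas ``approximate circuit'' is defined through operator-norm distance, and the two are related only up to a square root. So the scrupulous form of the statement is that $\epsilon$-error $\Parity_n/\Fanout_n$ yields an $O(\epsilon)$-approximate nekomata while an $\epsilon$-approximate nekomata yields $O(\sqrt\epsilon)$-error $\Parity_n/\Fanout_n$; the ``$O(\epsilon)$'' in the lemma should be read up to this standard polynomial conversion, and one recovers a clean $O(\epsilon) \leftrightarrow O(\epsilon)$ correspondence by instead measuring nekomata-approximation in Euclidean distance. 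The one structural ingredient I would simply import is the fact underlying the nekomata generalization of \Cref{EQUIV} --- that every gadget in $\mathcal R$ other than the $W^{\pm 1}$ calls acts only on the $n$ target qubits and the designated input/output qubits, never on $\ket{\psi_0}$ or $\ket{\psi_1}$, so that the final $W^\dagger$ uncomputes the junk faithfully --- which is precisely what is established in \Cref{PROOFS}.
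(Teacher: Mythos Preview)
Your proposal is correct. The Parity $\leftrightarrow$ Fanout equivalence (Hadamard conjugation) and the Parity/Fanout $\Rightarrow$ nekomata direction (apply the approximate gate to the appropriate input and read off the fidelity) match the paper's argument essentially verbatim.

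The genuinely different step is nekomata $\Rightarrow$ Parity. The paper runs the explicit circuit of \Cref{fig:nekomata_to_parity} on the approximate nekomata and tracks the state through each layer, arguing directly that the output has high fidelity with $\Parity_n\ket{x,b}$. Your route is a perturbation argument: round the preparation unitary $W$ to an exact nekomata preparer $\widetilde W = RW$ via a two-dimensional rotation with $\|R-I\|_{\mathrm{op}} = O(\sqrt\epsilon)$, invoke the \emph{exact} nekomata $\Rightarrow$ Parity reduction of \Cref{PROOFS} as a black box on $\widetilde W$, and telescope over the $O(1)$ calls to $W^{\pm 1}$. This is cleaner and more modular --- it never touches the internal structure of the reduction --- and it makes explicit the $O(\sqrt\epsilon)$ loss coming from the fidelity-versus-Euclidean-distance conversion, a point the paper's writeup elides when it asserts ``fidelity at least $1-2\epsilon$ \dots equivalently, the $\ell_2$-distance is at most $2\epsilon$.'' The paper's direct analysis, by contrast, could in principle yield sharper constants and avoids introducing the auxiliary unitary $R$, at the cost of being tied to the specific gadget. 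Both arrive at the same qualitative conclusion.
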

This lemma is again quite useful for us as circuit designers; now any circuit producing a state which has some subsystem that is \textit{approximately} cat-like suffices to approximately implement fanout or parity.

Finally, we define the bounded-error analogues of the quantum circuit complexity classes introduced thus far:
\begin{definition}[$\BQNC^i$]
A decision problem $L \subseteq \{0, 1\}^*$ is in $\BQNC^i$ if there exists a family of $\QNC^i$ circuits $\{C_n\}_{n \in \mathbb{N}}$ acting on $n + a(n) = \poly(n)$ qubits and a constant $c > 0$ such that for all $n\in \mathbb{N}$ and $x \in \{0, 1\}^n$ measuring the last qubit of $C_n\ket{x}\ket{0^{a(n)}}$ in the computational basis yields
\begin{itemize}
    \item $\ket{1}$ with probability at least $2/3$ if $x \in L$
    \item $\ket{1}$ with probability at most $1/3$ if $x \not\in L$
    \end{itemize}
\end{definition}

$\BQAC^i$, $\BQTC^i$, and $\BQNC^0_{wf}$ are defined similarly for their respective circuit classes.

\section{Bootstrapping weak parity gates\label{PMAJ}}
In this section we will show that for any non-empty parity restricted set $S \subseteq \{0, 1\}^n$ the unitary $U_S\ket{x} = (-1)^{\mathbb I_{x \in S}}\ket{x}$ can be bootstrapped in constant depth to approximately compute Parity. This construction generalizes the constant-depth exponential-size $\QAC$ circuit family given in \cite{rosenthal}. As a corollary, we find that for any polynomial $p$, there exist poly-size $\QTC^0$ circuits which have fidelity $1 - 1/p(n)$ with Parity. 

\subsection{Grid Construction}
Rather than directly computing Parity, the circuits described in this section will prepare approximate nekomata, which via \Cref{approx-NEK-PAR-FAN} can be used to compute Parity and Fanout with high essentially the same approximation error, up to constant factors.

We will make use of the following lemma:
\begin{lemma}[Lemma 4.3 of \cite{rosenthal}]\label{approx-nek}
Let $\ket\varphi$ be a state with $n$ ``target" qubits that measure to all-zeros with probability at least $1/2 - \epsilon$ and all-ones with probability at least $1/2 -\epsilon$. Then there exists an $n$-nekomata $\ket\nu$ such that $|\braket{\nu}{\varphi}|^2 \ge 1-2\epsilon$.
\end{lemma}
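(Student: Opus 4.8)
The plan is to exhibit an explicit nekomata obtained from $\ket{\varphi}$ by ``cleaning up'' the target register. Write $\ket{\varphi}$ on its $n$ target qubits and $m$ ancilla qubits in the computational basis of the target register,
\[
\ket{\varphi} = \sum_{x \in \{0,1\}^n} \ket{x} \otimes \ket{\phi_x},
\]
where the $\ket{\phi_x}$ are subnormalized $m$-qubit states with $\sum_x \|\ket{\phi_x}\|^2 = 1$. Measuring the target register in the computational basis yields $x$ with probability $\|\ket{\phi_x}\|^2$, so the hypothesis says $p := \|\ket{\phi_{0^n}}\|^2 \ge 1/2 - \epsilon$ and $q := \|\ket{\phi_{1^n}}\|^2 \ge 1/2 - \epsilon$. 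Since $p + q \le 1$, both $p$ and $q$ in fact lie in the interval $[1/2-\epsilon,\, 1/2+\epsilon]$. We may assume $\epsilon < 1/2$, since otherwise the hypothesis is vacuous and the conclusion $|\braket{\nu}{\varphi}|^2 \ge 1 - 2\epsilon$ holds trivially for any nekomata $\ket{\nu}$.

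The candidate nekomata is the result of projecting the target register onto $\mathrm{span}\{\ket{0^n},\ket{1^n}\}$, renormalizing the two branches to unit length \emph{separately}, and recombining them with equal weights. Concretely, set $\ket{\psi_0} := \ket{\phi_{0^n}}/\sqrt{p}$ and $\ket{\psi_1} := \ket{\phi_{1^n}}/\sqrt{q}$ (if $p = 0$ or $q = 0$, replace the corresponding $\ket{\psi_b}$ by an arbitrary unit vector), and define
\[
\ket{\nu} := \frac{\ket{0^n}\otimes\ket{\psi_0} + \ket{1^n}\otimes\ket{\psi_1}}{\sqrt 2},
\]
which is an $n$-nekomata by definition since $\ket{\psi_0},\ket{\psi_1}$ are unit $m$-qubit states.

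Next I would compute the overlap directly. Only the $x = 0^n$ and $x = 1^n$ terms of $\ket{\varphi}$ survive, giving
\[
\braket{\nu}{\varphi} = \frac{1}{\sqrt 2}\bigl(\braket{\psi_0}{\phi_{0^n}} + \braket{\psi_1}{\phi_{1^n}}\bigr) = \frac{\sqrt p + \sqrt q}{\sqrt 2},
\]
using $\braket{\psi_0}{\phi_{0^n}} = \|\ket{\phi_{0^n}}\|^2/\sqrt p = \sqrt p$ and likewise for the other term (this identity holds even in the degenerate case, where it reads $0 = \sqrt 0$). Hence $|\braket{\nu}{\varphi}|^2 = \tfrac12\bigl(p + q + 2\sqrt{pq}\bigr)$. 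Now $p + q \ge 1 - 2\epsilon$, and since $p,q \ge 1/2-\epsilon \ge 0$ we get $\sqrt{pq} \ge 1/2 - \epsilon$; plugging in,
\[
|\braket{\nu}{\varphi}|^2 \ge \frac{(1-2\epsilon) + 2(1/2-\epsilon)}{2} = 1 - 2\epsilon,
\]
as claimed.

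There is no serious obstacle here; the only mildly delicate points are (i) extracting from the normalization constraint that $p$ and $q$ are \emph{both} close to $1/2$ (not just bounded below), which is what licenses the symmetric $1/\sqrt2$ weighting in $\ket{\nu}$, and (ii) the elementary bound $\sqrt{pq} \ge 1/2 - \epsilon$. I would double-check that the degenerate cases $p=0$ or $q=0$ (possible only when $\epsilon = 1/2$, already excluded) do not cause issues, but as noted the overlap formula is robust to them.
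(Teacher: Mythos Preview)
Your proof is correct and essentially identical to the paper's: you construct the same nekomata (project $\ket{\varphi}$ onto $\ket{0^n}$ and $\ket{1^n}$ in the target register, normalize each branch, and recombine with equal $1/\sqrt{2}$ weights), and your overlap computation $|\braket{\nu}{\varphi}|^2 = \tfrac12(\sqrt{p}+\sqrt{q})^2 \ge \tfrac12(2\sqrt{1/2-\epsilon})^2 = 1-2\epsilon$ is the same as the paper's, just with the square expanded. One minor remark: your point (i) about needing the upper bounds $p,q \le 1/2+\epsilon$ is a red herring---neither your estimate nor the paper's uses them, and the $1/\sqrt{2}$ weighting is forced by the definition of nekomata, not licensed by any property of $p,q$.
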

\begin{proof}
    Suppose that the first $n$ qubits of $\ket\varphi$ are the targets. Then, the state 
    \[
    \ket\nu = \frac{1}{\sqrt{2}}\sum_{b \in \{0, 1\}}\frac{\proj{b^n}\otimes \mathbb{I}\ket{\varphi}}{\|\proj{b^n}\otimes \mathbb{I}\ket\varphi\|}
    \]
    is an $n$-nekomata and
    \begin{align*}
        |\braket{\varphi}{\nu}|^2 = \bigg{(}\frac{1}{\sqrt{2}}\sum_{b \in \{0, 1\}}\|\proj{b^n}\otimes \mathbb{I}\ket\varphi\|\bigg{)}^2 \geq \frac{1}{2}\bigg{(}\sqrt{1/2 - \epsilon} +  \sqrt{1/2 - \epsilon} \bigg{)}^2 = 1 - 2\epsilon
    \end{align*}
\end{proof}

As mentioned, a parity restricted gate can be though of a ``weak" parity gate in the sense that it correctly computes parity on some fraction $\frac{1 + \epsilon}2$-fraction of the inputs. The idea behind our construction is to use these ``weak" parity gates to prepare many bad, but not horrible approximate cat states in parallel. These bad, but not horrible cat states are of the form
\begin{align*}
    \ket{\phi} = \sqrt{p_0}\ket{0^n} + \sqrt{p_1}\ket{1^n} + \sqrt{\epsilon}\ket{\omega}
\end{align*}
where $\ket\omega$ is orthogonal to $\ket{b^n}$ for $b \in \{0, 1\}$. These initial states are bad approximate cat states in the sense that they have little overlap with the cat state, but aren't horrible because the distributions corresponding to their measurement outcomes are peaked only at $\ket{0^n}$ and $\ket{1^n}$. In the final stage of our construction we accrue the distributions on each of these bad cat states into some $n$ target qubits using Toffoli gates. We then show that for the right choice of parameters this accruing step effectively amplifies the original bad, but not horrible, distribution given by each of the weak parity gates in parallel. The final result is a good approximate nekomata.
\begin{figure}
    \centering
    \newcommand{\xmax}{4}
\newcommand{\ymax}{4}
\newcommand{\tikzscale}{.8}
\newcommand{\xpicshift}{3}
\newcommand{\stageoffset}{.5}

\begin{tikzpicture}[scale=\tikzscale]
  \begin{scope}
  \foreach \y in {1,...,\ymax} {
    \node[circle, fill, color=blue, inner sep=1.5pt, label={north east:{\tiny $\ket{0}$}}] (T\y) at (-.3,\y) {};
  }

  \foreach \x in {1,...,\xmax} {
      \foreach \y in {1,...,\ymax} {
          \node[circle, fill, inner sep=1.5pt, label={north east:{\tiny $\ket{1}$}}] (N\x\y) at (\x,\y) {};
          
      }
  }

  \draw[decorate,decoration={brace,amplitude=10pt}] 
  ($(T1)+(-.2,-.2)$) --  node[midway,left=10pt]{$n$} ($(T\ymax)+(-.2,.2)$);
  \draw[decorate,decoration={brace,amplitude=10pt}] 
  ($(N1\ymax)+(-.2,.5)$) --  node[midway,above=10pt]{$m$} ($(N\xmax\ymax)+(.2,.5)$);
\end{scope}

\begin{scope}[shift={(\xmax + \xpicshift,0)}]
  \foreach \y in {1,...,\ymax} {
    \node[circle, fill, inner sep=1.5pt, color=blue] (T\y) at (-.3,\y) {};
  }

  \foreach \x in {1,...,\xmax} {
      \foreach \y in {1,...,\ymax} {
          \node[circle, fill, inner sep=1.5pt] (N\x\y) at (\x,\y) {};
          
      }
  }

  \foreach \x in {1,...,\xmax} {
      \draw[rounded corners=5pt] (\x-0.2, .8) rectangle (\x+0.2, \ymax+.2);
}
\end{scope}
\begin{scope}[shift={(2*\xmax + 2*\xpicshift,0)}]
  \foreach \y in {1,...,\ymax} {
    \node[circle, fill, color=blue, inner sep=1.5pt] (T\y) at (-.3,\y) {};
  }

  \foreach \x in {1,...,\xmax} {
      \foreach \y in {1,...,\ymax} {
          \node[circle, fill, inner sep=1.5pt] (N\x\y) at (\x,\y) {};
          
      }
  }

  \foreach \y in {1,...,\xmax} {
      \draw[rounded corners=5pt] (-0.5, \y-.2) rectangle (\xmax+0.2, \y+.2);
}
\end{scope}
\node at (\xmax/2,-\stageoffset) {\shortstack{Stage 1: \\ Initialization}};
\node at (\xmax/2+\xmax+\xpicshift,-\stageoffset) {\shortstack{Stage 2: \\ $R_{\psi_S}$ gates}};
\node at (\xmax/2+2*\xmax+2*\xpicshift,-\stageoffset) {\shortstack{Stage 3: \\ Toffoli gates}};
\end{tikzpicture}
    \caption{Constructing a nekomata from $U_S$ and Toffoli gates. Target qubits are shown in blue.}
    \label{fig:nekomata_construction}
\end{figure}
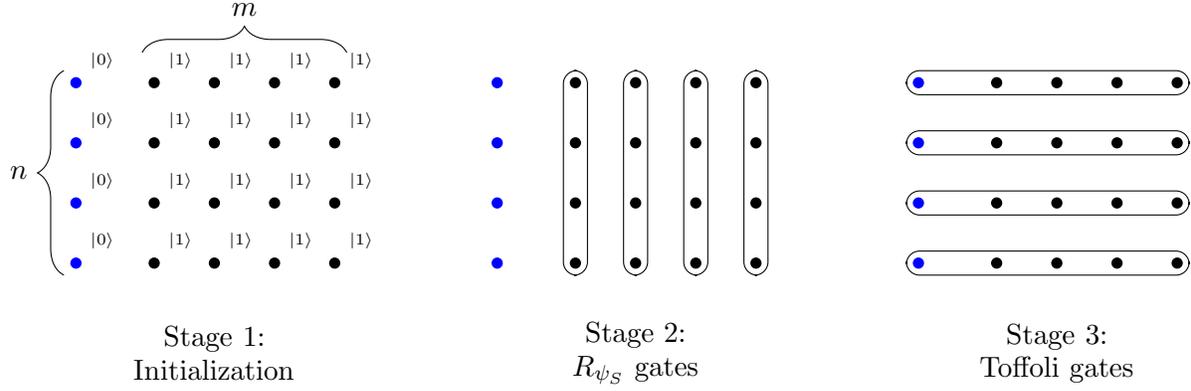
\begin{theorem}\label{GRID}
    For any parity-restricted $S \subseteq \{0, 1\}^n$ with $|S| \leq 2^{n - 4}$ there exists a depth-$4$, $O(n + \frac{2^{2n}}{|S|^2})$-size $\QAC_S$ circuit that constructs an $O(|S|^22^{-2n})$-approximate nekomata.
\end{theorem}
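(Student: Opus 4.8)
By \Cref{approx-NEK-PAR-FAN} it suffices to produce the approximate nekomata, and by \Cref{approx-nek} it suffices to build a constant-depth circuit preparing a state whose $n$ designated ``target'' qubits measure to $0^n$ and to $1^n$ each with probability at least $1/2 - O(|S|^2 2^{-2n})$; so the entire problem collapses to engineering one $n$-bit measurement distribution. Since $S$ is parity-restricted I would first normalize --- conjugating $U_S$ by a single $X$ and absorbing it into the single-qubit layers --- so that every $s \in S$ has \emph{even} weight. This is exactly what makes the Fourier coefficient $\widehat S(1^n) := \sum_{s \in S}(-1)^{|s|}$ equal $|S|$ rather than $0$, and hence lets the $\ket{1^n}$ branch of the noisy cat state below carry non-negligible weight.

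\textbf{Noisy cat states (Stages 1--2).} Following \Cref{fig:nekomata_construction}, I would introduce $n$ target qubits and an $n \times m$ grid of ancillas (the column count $m$ is a parameter fixed at the end), and apply in parallel to each length-$n$ column the gadget $R_{\psi_S}$ assembled from $U_S$ and Hadamard layers. A short Fourier computation --- writing $\widehat S(y) = \sum_{s \in S}(-1)^{s \cdot y}$, using $\widehat S(0^n) = \widehat S(1^n) = |S|$ and Parseval's identity $\sum_y \widehat S(y)^2 = 2^n |S|$ --- shows each column is put in a ``noisy cat state'' $\sqrt{p_0}\ket{0^n} + \sqrt{p_1}\ket{1^n} + \sqrt\epsilon\,\ket\omega$ with $\ket\omega \perp \ket{0^n},\ket{1^n}$, whose measurement distribution is peaked at $0^n$ and $1^n$: one of $p_0,p_1$ is $\Theta(1)$ and the other is $\Theta(|S|^2 2^{-2n})$, the junk weight $\epsilon$ is $O(|S| 2^{-n})$, and --- using $\widehat S(y)^2 \le |S|^2$ --- no single basis outcome inside $\ket\omega$ carries more probability than the small peak. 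The hypothesis $|S| \le 2^{n-4}$ enters here, keeping $\epsilon$ below an absolute constant. (If the bare gadget $H^{\otimes n} U_S H^{\otimes n}$ does not yet make the distribution peaked \emph{only} at $0^n, 1^n$, one extra $U_S$ layer inside $R_{\psi_S}$ should tame the remaining large Fourier coefficients.)

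\textbf{Accrual and accounting (Stage 3).} Now apply, for each row $i \in [n]$, a generalized Toffoli $\wedge_m$ reading the $m$ grid qubits of row $i$ into target qubit $i$ (inserting $X$ gates to switch between an $\AND$ and an $\OR$ across columns, whichever orientation puts the heavy amplitude on $\ket{1^n}$). Since the $m$ columns are in a product state and the Toffolis act classically in the computational basis, the grid's outcome is a product of $m$ copies of the noisy-cat distribution and the target register is a fixed bitwise function of it, so $\Pr[\text{target} = 1^n]$, $\Pr[\text{target} = 0^n]$, and $\Pr[\text{target mixed}]$ all have closed forms in $p_0, p_1, \epsilon$, and $m$. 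The plan is then (i) to choose the integer $m$ --- which will come out to $\Theta(2^{2n}/|S|^2)$ --- so that the accrual drives $\Pr[\text{target} = 1^n]$ and $\Pr[\text{target} = 0^n]$ each to within $O(|S|^2 2^{-2n})$ of $1/2$, and (ii) to show that the probability of a mixed target outcome, governed by the few columns landing in $\ket\omega$-outcomes, is also $O(|S|^2 2^{-2n})$; feeding both into \Cref{approx-nek} then yields the claimed approximate nekomata. Finally, the circuit has a constant number of multi-qubit layers (depth $4$ in the sense of \Cref{definition:depth}), uses one $U_S$ per column --- $\Theta(2^{2n}/|S|^2)$ of them --- and $n$ Toffolis, hence has size $O(n + 2^{2n}/|S|^2)$.

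\textbf{Where the work is.} Stages 1 and 2 are essentially the Fourier calculation; all the substance is in Stage 3, and within it the delicate point is (ii). A \emph{single} column landing in an $\ket\omega$-outcome can, through the $\AND$/$\OR$ across columns, pull the accrued target onto a string that is neither $0^n$ nor $1^n$; bounding the probability of this by $O(|S|^2 2^{-2n})$ forces one to use not merely the total junk weight $\epsilon$ but the finer structure of $\ket\omega$ --- how its weight is spread over Hamming weights and coordinates --- and that is the step I expect to be the crux of the argument. Optimizing $m$ in (i) to land within the stated tolerance of $1/2$ is more routine but still essential, since it is what pins the circuit size at $\Theta(2^{2n}/|S|^2)$.
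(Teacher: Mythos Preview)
Your overall architecture --- grid of columns, per-column gadget $R_{\psi_S}$, row-wise Toffolis into the targets, then \Cref{approx-nek} --- is exactly the paper's. The gap is in Stage~2: the numbers you quote for $p_0,p_1,\epsilon$ are those of the state $\ket{\psi_S}=H^{\otimes n}U_S H^{\otimes n}\ket{0^n}$ itself (your Fourier/Parseval calculation is precisely the computation of $\gamma_0=\braket{0^n}{\psi_S}^2$, $\gamma_1=\braket{1^n}{\psi_S}^2$, and $1-\gamma_0-\gamma_1$), not of the state that actually sits in each non-target column. In the paper those columns are initialized to $\ket{1^n}$ and the \emph{reflection} $R_{\psi_S}=I-2\proj{\psi_S}$ is applied, giving $R_{\psi_S}\ket{1^n}=\ket{1^n}-2\braket{\psi_S}{1^n}\,\ket{\psi_S}$. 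Every amplitude off $\{0^n,1^n\}$ therefore carries the extra scalar $2\braket{\psi_S}{1^n}=\pm 2\sqrt{\gamma_1}$, so the per-column junk weight is $1-p_0-p_1=4\gamma_1(1-\gamma_0-\gamma_1)=O(|S|^3 2^{-3n})$, a full factor of $\gamma_1$ smaller than your $O(|S|\,2^{-n})$. With this in hand the ``crux'' you flag in (ii) evaporates: a plain union bound over the $m=\Theta(1/\gamma_1)$ columns gives $\Pr[\text{some column bad}]\le 4m\gamma_1(1-\gamma_0-\gamma_1)<1-\gamma_0-\gamma_1=O(|S|\,2^{-n})$, with no appeal whatsoever to the internal structure of $\ket\omega$. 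Your parenthetical about an ``extra $U_S$ layer'' to tame large Fourier coefficients is moot for the same reason.

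Two smaller remarks. First, the paper's proof actually concludes with an $\frac{|S|}{2^{n-3}}$-approximate nekomata, and this $O(|S|\,2^{-n})$ bound is what is fed into the $\BQTC^0$ corollary; the $O(|S|^2 2^{-2n})$ in the theorem statement appears to be a slip, and chasing it may be what led you to expect a harder Stage-3 argument than is actually needed. Second, $R_{\psi_S}$ is not ``assembled from $U_S$ and Hadamard layers'' alone: it is $H^{\otimes n}U_S H^{\otimes n}(I-2\proj{0^n})H^{\otimes n}U_S H^{\otimes n}$, costing two $U_S$ gates and one generalized-Toffoli-type gate per column --- which is also why the gadget has depth~$3$ and the full circuit depth~$4$.
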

\begin{proof}
    The $\QAC_S$ circuit will act on $n(m + 1)$ qubits arranged in a grid of width $m + 1$ and height $n$. The first column will be designated as the ``target" qubits, initialized to $\ket{0^n}$ and all other columns initialized to $\ket{1^n}$ (say, with a layer of $X$ gates). To each column apply an $R_{\psi_S} = \mathbb{I} - 2\proj{\psi_S}$ gate, where $\ket{\psi_S} = H^{\otimes n}U_SH^{\otimes n}\ket{0^n}$. Note that this can be implemented in depth-$3$ as 
    \[
    \mathbb{I} - 2\proj{\psi_S} = H^{\otimes n}U_SH^{\otimes n}(\mathbb{I} - 2\proj{0^n})H^{\otimes n}U_SH^{\otimes n}
    \]
    which looks like the following quantum circuit:
    \begin{center}
    \begin{quantikz}
     & \gate[3]{R_{\psi_S}} & \midstick[3,brackets=none]{\;\;=\;\;} & \gate{H} & \gate[3]{U_S} & \gate{H} & \gate{X} & \ctrl{1} & \gate{X} & \gate{H} & \gate[3]{U_S} & \gate{H} &  \\
     &  &  & \gate{H} &  & \gate{H} & \gate{X} & \ctrl{1} & \gate{X} & \gate{H} &  & \gate{H} & \\
     &  &  & \gate{H} &  & \gate{H} & \gate{X} & \control{} & \gate{X} & \gate{H} & & \gate{H} & 
    \end{quantikz}
    \end{center}
    Finally, apply a Toffoli gate along each row with the output qubit being the corresponding target qubit (i.e. the qubit in the first column). We will now show that the probability that the target column is measured (in the computational basis) as $\ket{b^n}$ is at least $\frac{1}{2} - \epsilon$ for $b \in \{0, 1\}$.
    
    To start, let
    \begin{align*}
    \gamma_0 &:= \braket{0^n}{\psi_k}^2 
        = \bra{0^n}H^{\otimes n} U_S H^{\otimes n}\ket{0^n}^2 
        = \bigg{(} 1 - \frac{|S|}{2^{n - 1}}\bigg{)}^2 \\
    \gamma_1 &:= \braket{1^n}{\psi_k}^2
        = \bra{1^n}H^{\otimes n} U_S H^{\otimes n}\ket{0^n}^2
        = \frac{|S|^2}{2^{2n - 2}}
    \end{align*}
   For $b \in \{0, 1\}$, let $p_b$ be the probability that a given non-target column yields $\ket{b^n}$ after measuring in the computational basis. We have
   \begin{align*}
        p_0 &= \bra{0^n} (\mathbb{I} - 2 \proj{\psi_k})\ket{1^n}^2 = 4\gamma_0\gamma_1 \\\
        p_1 &= \bra{1^n} (\mathbb{I} - 2 \proj{\psi_k})\ket{1^n}^2 = (1 - 2\gamma_1)^2 
    \end{align*}
    Note that computational basis measurements commute with Toffoli gates of any size, so in order for the targets to be measured as $\ket{1^n}$ all other columns must also be measured as $\ket{1^n}$. Let $m = \lfloor \frac{-\ln(2)}{2\ln(1 - 2\gamma_1)} \rceil$, then
    \begin{align*}
        \mathbb{P}[\text{Targets measure } \ket{1^n}] &= (1 - 2\gamma_1)^{2m}\\
        &> (1 - 2\gamma_1)^{\frac{-\ln(2)}{\ln(1 - 2\gamma_1)} + 1}\\
        &> \frac{1}{2}(1 - 2\gamma_1)\\
        &= \frac{1}{2} - \frac{|S|^2}{2^{2n - 2}}
    \end{align*}
    Now, call a non-target column ``bad" if it is measured as anything other than $\ket{0^n}$ or $\ket{1^n}$. Via a union bound,
    \begin{align*}
    \mathbb{P}[\text{Some column is bad}] \le m(1 - p_0 - p_1) = m(1 - 4\gamma_0\gamma_1 - (1 - 2\gamma_1)^2)
        = 4m\gamma_1(1 - \gamma_0 - \gamma_1)
    \end{align*}
    Observe that 
    \[
    \frac{1}{2}(1 - 2\gamma_1) = (1 - 2\gamma_1)^{\frac{-\ln(2)}{\ln(1 - 2\gamma_1)} + 1} \leq (1 - 2\gamma_1)^{2m} \leq \exp(-4m\gamma_1),
    \]
    so $4m\gamma_1\leq -\ln(\frac{1}{2} - \gamma_1) < 1$ as $\gamma_1 \leq \frac{1}{16}$. So,
    \begin{align*}
        4m\gamma_1(1 - \gamma_0 - \gamma_1) &< 1 -\gamma_0 -\gamma_1 \\
        &\leq \frac{|S|}{2^{n - 2}}
    \end{align*}
    Thus, every column is good with probability at least $1 - \frac{|S|}{2^{n - 2}}$ and the targets are measured as $\ket{0^n}$ with probability at least $1 - \frac{|S|}{2^{n - 2}} - (\frac{1}{2} - \frac{|S|^2}{2^{2n - 2}}) \geq \frac{1}{2} - \frac{|S|}{2^{n - 2}}$. By Lemma \ref{approx-nek}, the state produced is an $\frac{|S|}{2^{n -3}}$-approximate nekomata. Also, note that $m = \Theta(1/\gamma_1)$, hence the circuit has size $O(n + m) = O(n + \frac{2^{2n}}{|S|^2})$.
\end{proof}
Next, we argue that Threshold gates can be used to hit the ``sweet spot'' of \Cref{GRID}. Namely, they correspond to parity-restricted sets of the right size to make the size and accuracy of the above construction polynomial.

\begin{corollary}
    $\BQTC^0 = \BQNC^0_{wf}$
\end{corollary}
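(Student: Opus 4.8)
The plan is to prove the two inclusions separately. The inclusion $\BQTC^0 \subseteq \BQNC^0_{wf}$ is already known: a Threshold gate can be simulated to high fidelity by a constant-depth, polynomial-size circuit of Fanout and single-qubit gates \cite{hs_fanout}. So the real content is the reverse direction $\BQNC^0_{wf} \subseteq \BQTC^0$. By \Cref{approx-NEK-PAR-FAN} it suffices to exhibit, for every polynomial $p$, a constant-depth $\QTC$ circuit of size $\poly(n)$ that prepares an $O(1/p(n))$-approximate $n$-nekomata together with its inverse; composing this with the $\QAC^0$ reductions of \Cref{approx-NEK-PAR-FAN} — whose generalized Toffoli gates are themselves Threshold gates, since $\land_m = \Threshold_{m,m}$ — then yields a $\QTC^0$ circuit computing $\Parity_n$, hence $\Fanout_n$, with error $O(1/p(n))$. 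Since $\BQNC^0_{wf}$ is closed under success amplification, a given $\QNC^0_{wf}$ circuit may be assumed to accept/reject with probability at least $0.99$; replacing each of its $s = \poly(n)$ Fanout gates by such an implementation with inverse-polynomial error $O(1/s)$ then leaves a constant-depth, polynomial-size $\QTC^0$ circuit that still decides the language with probability at least $2/3$.

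The key step is running \Cref{GRID} with a parity-restricted set coming from Threshold gates. Fix $S_n = \{x \in \{0,1\}^n : |x| = k\}$ for a weight $k$ to be chosen; every element has the same Hamming weight, so $S_n$ is (trivially) parity-restricted, and $|S_n| = \binom nk$. The phase gate $U_{S_n}\ket x = (-1)^{\mathbb I_{|x|=k}}\ket x$ is built from two Threshold gates: since $\mathbb I_{|x|=k} = \mathbb I_{|x|\ge k}\oplus\mathbb I_{|x|\ge k+1}$, preparing an ancilla in $\ket{-}$ and applying $\Threshold_{n,k}$ and then $\Threshold_{n,k+1}$ onto it imprints exactly the phase $(-1)^{\mathbb I_{|x|\ge k}+\mathbb I_{|x|\ge k+1}} = (-1)^{\mathbb I_{|x|=k}}$. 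The generalized Toffoli gates in Stage 3 of the \Cref{GRID} construction, and the $\ket{0^n}$-reflection inside each $R_{\psi_{S_n}}$, are likewise $\Threshold_{m,m}$ gates. Hence the entire \Cref{GRID} circuit for $S_n$ is a constant-depth $\QTC$ circuit, at only a constant-factor cost in depth and size over the $\QAC_{S_n}$ version.

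It remains to pick $k$ so that $|S_n| = \binom nk$ lands in the ``sweet spot'' of \Cref{GRID}: we need $|S_n| \le 2^{n-4}$, and we want $|S_n| = \Theta(2^n/p(n))$, which makes the error $O(|S_n|^2 2^{-2n}) = O(1/p(n)^2)$ inverse-polynomial and the size $O(n + 2^{2n}/|S_n|^2) = O(n + p(n)^2)$ polynomial. For all large $n$ such a $k$ exists: the fractions $\binom nk/2^n$ for $k$ decreasing from $\lfloor n/2\rfloor$ sweep from $\Theta(1/\sqrt n)$ downward with multiplicative steps bounded by a constant as long as $k$ stays within $O(\sqrt{n\log n})$ of $n/2$, which is exactly the range in which the target $\Theta(1/p(n))$ lies (when $p(n) \lesssim \sqrt n$ one simply takes $k = \lfloor n/2\rfloor$). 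A Fanout gate acting on only $m$ qubits, where $m$ is too small for this to give inverse-polynomial error (say $m = O(\log n)$), is handled by extracting it from a larger Parity gate: $\Fanout_m = H^{\otimes(m+1)}\Parity_m H^{\otimes(m+1)}$ by Moore's identity \cite{moore:1999}, and $\Parity_m$ is just $\Parity_n$ with $n-m$ of its input wires pinned to $\ket 0$ — those wires are never modified, so the padding is exact. The finitely many small $n$ are handled by hard-coded exact $O(\log m)$-depth CNOT circuits.

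The step I expect to be the main obstacle is exactly this parameter matching. \Cref{GRID} is only useful when $|S|$ can be made polynomially smaller than $2^n$, while cascading $\poly(n)$ Fanout replacements forces the per-gate error down to $1/\poly(n)$, which pins $|S|$ into a fairly narrow window bounded above by $2^{n-4}$; one has to verify that some integer weight $k$ always realizes a value of $\binom nk$ in that window, and that the discreteness of the binomial coefficients costs only constant factors (absorbed into the $O(\cdot)$ of \Cref{approx-NEK-PAR-FAN}). The remaining ingredients — that $U_{S_n}$ and the Toffoli gates are $\QTC$-implementable, the reduction through nekomata, the Fanout-to-Parity padding, and the already-known reverse inclusion — are essentially bookkeeping.
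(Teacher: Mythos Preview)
Your proposal is correct, but the paper takes a simpler route for the error-reduction step. You tune the slice weight $k$ so that $\binom{n}{k}/2^n$ hits a prescribed inverse-polynomial target, which forces you into the existence argument about multiplicative gaps between consecutive binomial coefficients and into separate handling of the regime $p(n)\lesssim\sqrt n$ and of finitely many small $n$. The paper instead always takes the middle slice $S=\{x:|x|=N/2\}$ (so $|S|=\binom{N}{N/2}=\Theta(2^N/\sqrt{N})$ by Stirling), obtaining an $O(1/\sqrt{N})$-approximate $N$-nekomata of size $O(N)$ directly from \Cref{GRID}; then, to get an $O(n^{-c/2})$-approximate $n$-nekomata, it simply runs the construction with $N=n^c$ and ignores all but $n$ of the targets, since any $n^c$-nekomata is in particular an $n$-nekomata. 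This sidesteps the binomial-matching entirely and needs no case analysis.

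Your padding trick for small Fanout widths is in fact the same idea in disguise --- you pad $\Parity_m$ up to $\Parity_n$ with zero wires --- and pushing it one step further (pad all the way to $\Parity_{n^c}$) would have collapsed your argument to the paper's. Both approaches are valid; yours is more intricate but shows that one never needs more than $n(m+1)=O(n\cdot p(n)^2)$ qubits, while the paper's keeps the analysis to two lines at the cost of $n^c$ target qubits.
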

\begin{proof}
    Without a loss of generality assume $n$ is even, otherwise this can be rectified with a single ancilla qubit. Let $S = \{x \in \{0, 1\}^n \ : \ |x| = \frac{n}{2}\}$ i.e. $S$ is the Hamming slice of weight $\frac{n}{2}$. Note that $|x| = \frac{n}{2}$ if and only if $\Maj(x_1, \dots, x_n) = 1$ and $\Maj(x_1, \dots x_n, 0) = 0$, thus the $U_S$ gate can be implemented in depth $2$ by applying majority once to $\ket{\psi}$, tacking on an ancilla qubit set to $\ket{0}$ then applying another majority gate to $\ket{\psi}\ket{0}$. In this case the circuit from Theorem \ref{GRID} has size $s = O(n + \frac{2^{2n}}{|S|^2})$ and produces an $\epsilon = \frac{|S|}{2^{n - 3}}$-approximate nekomata. Since 
    \begin{align*}
        \dbinom{n}{n/2} \in \bigg{[}\frac{2^n}{\sqrt{2n}},  \frac{2^n}{\sqrt{n\pi/2}} \bigg{]}
    \end{align*}
    via Stirling's formula, it follows that $s = O(n)$ and $\epsilon = O(\frac{1}{\sqrt{n}})$. 
    
    Note that we can make the error an arbitrarily small polynomial by making the circuit polynomially larger. That is, suppose we want to construct an $O(n^{-c/2})$-approximate $n$-nekomata for some $c > 1$. Simply use the construction above for an $n^c$-nekomata, but only use $n$ of the targets (i.e., any $n^c$-nekomata \emph{is} an $n$-nekomata) The circuit will have size $O(n^c)$ and error at most $O(n^{-c/2})$. Therefore, by \Cref{approx-NEK-PAR-FAN}, there are poly-size $\QTC^0$ circuits to compute Fanout to arbitrary polynomial precision, so $\BQTC^0 = \BQNC^0_{wf}$.
\end{proof}
\subsection{Removing the Toffoli gates}
The construction presented in Theorem \ref{GRID} for generic $S$ requires large Toffoli gates, however we will show that for some regimes of $|S|$, these gates are unnecessary, i.e., $\QNC^0_S$ circuits can exactly compute Toffoli on polynomially many qubits. We will show that this is indeed the case when $|S| \geq 2^{n - O(1)}$ and $|S| \leq 2^{(1 - \epsilon)n}$ for a fixed constant $\epsilon < 1$.
\begin{lemma}\label{SUBS}
    Let $c$ be constant, and let $S$ be a parity restricted set with size $|S| \geq 2^{n - c}$ and strings of parity $b \in \{0, 1\}$. Then, there exist some $c - 1$ bit-strings $t_1, t_2, \dots t_{c - 1} \in \{0, 1\}^n$ such that $|x| \equiv b \pmod{2}$ iff 
    \begin{align*}
        \bigvee_{y \in \mathrm{Span}(t_1, \dots t_c)} \{x \oplus y \in S\}
    \end{align*}
    is satisfied.
\end{lemma}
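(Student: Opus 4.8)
The plan is to recast the claimed bi-implication as a covering problem inside the group of even-weight strings and then build $\mathrm{Span}(t_1,\dots,t_{c-1})$ greedily, one generator at a time.

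First I would reduce and dispatch the easy direction. Let $E = \{x \in \{0,1\}^n : |x|\text{ even}\}$ and $E_b = \{x : |x|\equiv b \pmod 2\}$, a coset of $E$ of size $2^{n-1}$. If every $t_i$ has even Hamming weight, then so does every $y$ in $H := \mathrm{Span}(t_1,\dots,t_{c-1})$, hence $x \oplus y$ has the same parity as $x$; since $S \subseteq E_b$, the condition ``$x \oplus y \in S$'' already forces $|x|\equiv b$, which gives the ``only if'' direction for free (and conversely a $t_i$ of odd weight would break it, so evenness of the $t_i$ is necessary). So it remains to pick even-weight $t_1,\dots,t_{c-1}$ with $S + H := \bigcup_{y \in H}(S\oplus y) = E_b$. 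Translating $S$ by a fixed string of parity $b$ turns this into a pure statement in the group $E \cong \mathbb{F}_2^{n-1}$: given $S' \subseteq \mathbb{F}_2^{n-1}$ of density at least $2^{1-c}$, find a subspace $H$ spanned by $c-1$ vectors with $S' + H = \mathbb{F}_2^{n-1}$.

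I would then build $H$ by doubling. Set $C_0 = S'$, and given $t_1,\dots,t_i$ with covered set $C_i = S' + \mathrm{Span}(t_1,\dots,t_i)$ (note $C_i$ is invariant under that span), choose $t_{i+1}$ outside the span minimizing $|C_i \cap (C_i \oplus t)|$. The pair-counting identity $\sum_t |C_i \cap (C_i \oplus t)| = |C_i|^2$ forces the minimum to be at most $|C_i|^2/2^{n-1}$, so $C_{i+1} = C_i \cup (C_i \oplus t_{i+1}) = S' + \mathrm{Span}(t_1,\dots,t_{i+1})$ has uncovered fraction at most the square of that of $C_i$. Starting from uncovered fraction at most $1 - 2^{1-c}$, after a bounded number of rounds the uncovered set shrinks below $1/2^{n-1}$ and is therefore empty; the chosen $t_i$ may be taken linearly independent, since a dependent choice leaves $C_i$ unchanged and we would already be done. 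Translating back gives the lemma.

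The step I expect to be the real obstacle is squeezing \emph{exact} coverage out of this argument with exactly $c-1$ generators: the crude averaging bound only squares the uncovered fraction each round, and $(1-2^{1-c})^{2^{c-1}}$ is a positive constant, so closing the last gap requires using the precise value of $|S|$ and the $\mathbb{F}_2$-linear structure of the translates rather than a generic shift, plus a little care to keep the greedily chosen $t_i$ even-weight and independent throughout. The reduction, the pair-counting identity, and moving between $E_b$ and $\mathbb{F}_2^{n-1}$ are all routine.
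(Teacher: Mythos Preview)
Your reduction to a covering problem in $\mathbb F_2^{n-1}$ and the treatment of the easy direction are both fine, and the doubling inequality $1-\alpha_{i+1}\le(1-\alpha_i)^2$ is valid. But the obstacle you flag at the end is not a technicality to be cleaned up: the lemma with exactly $c-1$ generators is false. Take $n=5$, $c=2$, and let $S\subset\{0,1\}^5$ be the even-parity lift of
\[
S'=\{0000,\,0001,\,0010,\,0100,\,1000,\,0011,\,0101,\,1111\}\subset\mathbb F_2^4.
\]
Then $|S|=8=2^{n-c}$, yet a direct check gives $S'+S'=\mathbb F_2^4$, so for every nonzero $t$ the sets $S'$ and $S'\oplus t$ overlap and a single even-weight generator (the $c-1=1$ case) can never make $S'\cup(S'\oplus t)$ all of $\mathbb F_2^4$. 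More generally, a first-moment calculation over random $S$ of density $2^{1-c}$ shows that for large $n$ no $O(1)$-dimensional $H$ works; one typically needs $c-1+\Theta(\log n)$ generators, which is exactly what your squaring argument produces. So your method is essentially tight, but the target $c-1$ is unattainable, and no appeal to ``the $\mathbb F_2$-linear structure'' will close the gap.

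For comparison, the paper does not use averaging at all. It argues that $|S|\ge 2^{n-c}$ forces $S$ to contain $n-c$ linearly independent even-weight vectors, completes them by $t_1,\dots,t_{c-1}$ to a basis of the even-weight hyperplane $\mathcal E$, and then asserts that every $x\in\mathcal E$ is of the form $s\oplus y$ with a \emph{single} $s\in S$ and $y\in\mathrm{Span}(t_1,\dots,t_{c-1})$. That ``hence'' is the same gap wearing different clothes: spanning $\mathcal E$ with $S\cup\{t_i\}$ only expresses $x$ as a sum of \emph{several} elements of $S$ plus some $y$, not one $s\in S$, and the counterexample above breaks this argument too. So do not look to the paper's proof for the idea you feel is missing.
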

\begin{proof}
    Suppose that $b = 0$. Let $\mathcal E \subset \mathbb{F}_2^n$ be the subspace of dimension $n - 1$ consisting of vectors of even Hamming weight. Observe that $S \subseteq \mathcal E$, and moreover that $S$ contains at least $n - c$ linearly independent elements of $\mathcal E$. Therefore, there exist some $t_1, \cdots t_{c - 1} \in \mathcal E$ such that $S \cup \{t_1, \dots t_{c - 1}\}$ span $\mathcal E$. Hence, every element of $\mathcal E$ can be written as $s \oplus y$ for some $s \in S$ and $y \in \text{span}\{t_1, \dots t_{c - 1}\}$. Thus, $|x| \equiv 0$ iff the disjunction is satisfied.
    
    If $b = 1$ then the set $S'$ obtained by flipping the first bit of every element of $S$ contains vectors of even Hamming weight - further, $S'$ contains at least $n - c$ linearly independent vectors. So, take $\{t_1, \dots t_{c - 1}\}$ as before such that $S' \cup \{t_1, \dots t_{c - 1}\}$ spans $\mathcal E$. Any vector of even Hamming weight, $y\in \mathbb{F}_2^n$, can be expressed as $y = s' + t'$ for some $s' \in S'$ and $t' \in \text{Span}(t_1, \dots t_c)$. Now, observe that if $x = (x_1, \dots x_n) \in \mathbb{F}_2^n$ has odd Hamming weight then $(x_1 \oplus 1, \dots x_n)$ can be expressed as $s' + t'$ for some $s' \in S'$ and $t'\in \text{Span}(t_1, \dots t_c)$. Since $s' = (s_1\oplus 1, s_2, \dots s_n)$ for some $s = (s_1, s_2, \dots s_n) \in S$, it follows that $x$ has odd Hamming weight iff the disjunction is satisfied.
\end{proof}
    As shown above, if $|S| \geq 2^{n - c}$ for some constant $c$ then we can extend $S$ linearly to ``cover" all strings of a fixed Hamming weight. To implement a parity gate in this way, one can encode every linear combination $t' \in \text{Span}(t_1, \dots t_c)$ in the ancilla and then apply a $U_S$ gate to $\ket{x \oplus t'}$ for every $t'$ - this can be done in constant depth using only $U_S$ and $\CNOT$ gates since  $|\text{Span}(t_1, \dots t_c)| = 2^{c} = O(1)$. If any of these $U_S$ gates evaluate to $1$ then $x$ must have Hamming weight consistent with that of the strings in $S$, in effect computing the parity of $x$. We will now see how generalized Toffoli can be computed with just $U_S$ and $\CNOT$ gates when $|S|$ is sufficiently small.
\begin{lemma}\label{RESTRICT}
    For any $S \subseteq \{0, 1\}^n$ there exists some $s \in S$ and some subset of indices $\{i_1, i_2, \dots i_k\}$ such that $s$ is the unique $x \in S$ which satisfies
    $x_{i_j} = s_{i_j}$ for all $j \in [k]$. Further, $k \leq \log{|S|}$.
\end{lemma}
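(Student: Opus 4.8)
The plan is to prove the lemma by a greedy bisection on the coordinates: starting from all of $S$, I repeatedly pick a coordinate on which the surviving candidates disagree and keep only the smaller of the two halves, until a single string is left. Thinking of the eventual index set $\{i_1,\dots,i_k\}$ together with the fixed bits $s_{i_1},\dots,s_{i_k}$ as a short ``address'' for $s$ inside $S$, this builds the address one coordinate at a time while (at least) halving the number of candidates each step, which is exactly what yields the $k \le \log|S|$ bound.

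Concretely, I would set $S_0 := S$ (tacitly assuming $S \neq \emptyset$, since otherwise no $s \in S$ exists), and given $S_{j-1}$ with $|S_{j-1}| \ge 2$ proceed as follows. Since the elements of $S_{j-1} \subseteq \{0,1\}^n$ are pairwise distinct, there is at least one coordinate $i_j$ on which they do not all agree, so both $\{x \in S_{j-1} : x_{i_j}=0\}$ and $\{x \in S_{j-1} : x_{i_j}=1\}$ are nonempty. Let $c_j \in \{0,1\}$ be a bit for which $S_j := \{x \in S_{j-1} : x_{i_j}=c_j\}$ is the smaller of these two sets (ties broken arbitrarily); then $1 \le |S_j| \le \lfloor |S_{j-1}|/2 \rfloor$. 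I repeat until some $S_k$ has $|S_k|=1$ and take $s$ to be its unique element (if $|S|=1$ already, take $k=0$ and the empty index set). By construction $s_{i_j}=c_j$ for every $j$, and any $x \in S$ with $x_{i_j}=s_{i_j}$ for all $j \in [k]$ survives every restriction, hence lies in $S_k=\{s\}$; thus $s$ is the unique element of $S$ agreeing with $s$ on $\{i_1,\dots,i_k\}$.

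For the size bound, iterating $|S_j| \le |S_{j-1}|/2$ gives $|S_{k-1}| \le |S|/2^{k-1}$, and the process only continued past step $k-1$ because $|S_{k-1}| \ge 2$, so $2 \le |S|/2^{k-1}$, i.e.\ $2^k \le |S|$, i.e.\ $k \le \log|S|$ (the $k=0$ case matching $\log 1 = 0$). The only place the argument can slip is maintaining $S_j \neq \emptyset$ while still dividing the size by at least two at each step; this is precisely why $i_j$ must be chosen to be a coordinate on which the current set genuinely splits and why $c_j$ is taken to be the minority value. Beyond that there is no real obstacle: the floor in $\lfloor |S_{j-1}|/2 \rfloor$ is harmless since only the clean inequality $|S_j| \le |S_{j-1}|/2$ is needed to drive the recursion, so the write-up should be short.
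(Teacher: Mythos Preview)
Your proposal is correct and follows essentially the same greedy-bisection argument as the paper: repeatedly split the current set on a coordinate where it disagrees, keep the smaller half, and stop when a single string remains. Your write-up is in fact somewhat more careful than the paper's (you explicitly verify the uniqueness property and derive the inequality $2^k \le |S|$ from $|S_{k-1}|\ge 2$), but the underlying idea is identical.
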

\begin{proof}
    Note that unless $|S| = 1$ there exists some index on which elements of $S$ take different values. If $|S| = 1$ we are done and can take this single element to be $s$. Otherwise, let $i_1$ be the first index on which elements of $S$ take different values. We will now partition $S$ into two sets $S^1_0$ and $S^1_1$ where $S^1_b = \{s \in S \ | \ s_{i_1} = b\}$. Take $T_1$ to be the set with fewer elements and repeat this procedure, defining $T_j$ similarly for $j > 1$. Since $|T_{j + 1}| \leq \frac{|T_{j + 1}|}{2}$ for $j > 1$, it follows that for some $k > 1$, $|T_k| = 1$ and it follows that $k \leq \log{|S|}$.
\end{proof}
Now, when $|S|$ is sufficiently small, there is always some way to fix a small number ($\log |S|$) of bits so that the unfixed bits have a unique assignment consistent with $S$. In particular, if $|S| \leq 2^{(1 - \epsilon)n}$ for some constant $\epsilon \in (0, 1)$ then there exists some partial assignment of at most $(1 - \epsilon)n$ bits such that for the remaining $\epsilon n$ bits there is a unique assignment such that the resulting string is a member of $S$. For simplicity, suppose that the partial assignment is on the first $(1 - \epsilon)n$ as $y \in \{0, 1\}^{(1 - \epsilon)n}$ and that the unique assignment for the remaining bits which is consistent with $S$ is $z \in \{0, 1\}^{\epsilon n}$. Now, for $x \in \{0, 1\}^{\epsilon n}$ we can see that
\begin{align*}
    U_S\ket{y}\ket{x}\ket{0} = \ket{y}\ket{x}\ket{\mathbb{I}_{z}(x)}
\end{align*}
In this way, after fixing the first $(1 - \epsilon)n$ bits to $y$ forces $U_S$ to act like a $U_{\{z\}}$ gate on the remaining qubits. This gate is locally equivalent to a Toffoli gate; we can just apply $X$ gates to the wires on which $z_i = 0$. Since $\epsilon$ is a constant, we can repeat this procedure $1/\epsilon$ times to implement the Toffoli gate on $n$ qubits in constant depth. In this way, we can directly implement generalized Toffoli gates in the grid construction of \Cref{GRID} using the $U_{S_n}$ gates when $|S|$ is sufficiently small.

\begin{corollary}
    For any parity restricted set $S$ which satisfies $|S| \geq \Omega(2^n)$ or $|S| \leq 2^{(1 - \epsilon)n}$ for some fixed $\epsilon \in (0, 1)$ there exist constant-depth $\QNC_S$ circuits of size $O(n + \frac{2^{2n}}{|S|^2})$ which prepare $O(\frac{|S|^2}{2^{2n}})$-approximate nekomata. 
\end{corollary}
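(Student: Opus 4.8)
The plan is to treat the two size regimes separately, in each case reducing to tools already established. For $|S| \ge \Omega(2^n)$ I would use \Cref{SUBS} to obtain a Parity gate outright, and for $|S| \le 2^{(1-\epsilon)n}$ I would run the grid construction of \Cref{GRID} essentially verbatim, replacing its generalized Toffoli gates by $\QNC^0_S$ subcircuits supplied by \Cref{RESTRICT}.

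In the first regime, write $|S| \ge 2^{n-c}$ with $c$ a constant. As explained in the discussion following \Cref{SUBS}, $\QNC^0_S$ circuits then compute $\Parity_n$ outright: encode the $O(1)$ vectors of $\mathrm{Span}(t_1,\dots,t_{c-1})$, offset $x$ by each in turn with $\CNOT$s, and apply $U_S$ to each offset copy, which uses $O(1)$ many $U_S$ gates and $O(n)$ two-qubit gates in constant depth. Since $\Parity_n$ is locally equivalent to $\Fanout_n$ by conjugation with Hadamards \cite{moore:1999}, one more Fanout applied to $\ket{+}$ and $n-1$ fresh $\ket{0}$ ancillas produces the exact cat state $\tfrac{1}{\sqrt{2}}(\ket{0^n}+\ket{1^n})$ — an exact, hence $O(|S|^2 2^{-2n})$-approximate ($\Omega(1)$ being the relevant quantity here), $n$-nekomata — with gate count $O(n) = O(n + 2^{2n}/|S|^2)$.

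In the second regime, note that $2^{(1-\epsilon)n} \le 2^{n-4}$ once $n \ge 4/\epsilon$, so \Cref{GRID} applies and gives a depth-$4$ $\QAC_S$ circuit of size $O(n + 2^{2n}/|S|^2)$ preparing an $O(|S|^2 2^{-2n})$-approximate nekomata; the finitely many $n < 4/\epsilon$ are constant-size instances handled by hand (every state is an $O(1)$-approximate nekomata, and $|S|^2 2^{-2n} = \Theta(1)$ for such $n$). I would then eliminate the generalized Toffoli gates using \Cref{RESTRICT}: fixing a suitable set of at most $(1-\epsilon)n$ input coordinates to agree with some $s \in S$ makes $U_S$ act on the remaining $\ge \epsilon n$ coordinates exactly as the phase version of their $\AND$, up to a layer of $X$ gates; so $O(1)$ many $U_S$ gates together with $\CNOT$s and $X$s simulate a generalized Toffoli on $\Omega(\epsilon n)$ qubits, and these can be composed to compute the wide row-$\AND$s occurring in \Cref{GRID}. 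Substituting such a $\QNC^0_S$ subcircuit for each row's Toffoli leaves the approximate-nekomata guarantee intact, and the size stays $O(n + 2^{2n}/|S|^2)$ since the $U_S$ gates realizing all $n$ rows number $O(2^{2n}/|S|^2)$ in total.

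The step I expect to be the main obstacle is this last substitution — checking that removing the Toffoli gates in the small-$|S|$ regime costs only constant depth. Each row of the grid in \Cref{GRID} carries a generalized Toffoli acting on all $m+1 = \Theta(2^{2n}/|S|^2)$ of its qubits, while one $U_S$ gate yields (via \Cref{RESTRICT}) only an $\AND$ of $\Omega(\epsilon n)$ bits, so one has to argue carefully that the partial $\AND$s assemble into the full row-$\AND$ within $O(1)$ depth. The large-$|S|$ case is free of this difficulty, and the remainder is routine bookkeeping of the size and accuracy parameters inherited from \Cref{GRID}.
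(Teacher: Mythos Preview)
Your plan matches the paper's exactly. The paper's ``proof'' of this corollary consists only of the two discussions that precede it: the paragraph after \Cref{SUBS} (for $|S|\ge 2^{n-c}$, constantly many offsets let $U_S$ plus $\CNOT$s compute Parity, hence Fanout, hence a cat state) and the paragraph after \Cref{RESTRICT} (for $|S|\le 2^{(1-\epsilon)n}$, fix $(1-\epsilon)n$ coordinates to turn $U_S$ into a width-$\epsilon n$ Toffoli, compose $1/\epsilon$ of these into a width-$n$ Toffoli, and feed them into the grid of \Cref{GRID}). Your proposal is a faithful expansion of precisely this argument.

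You are also right to isolate the Toffoli substitution as the crux --- and the obstacle you flag is genuine rather than a bookkeeping detail. In the small-$|S|$ regime each row of the grid carries a Toffoli on $m=\Theta(2^{2n}/|S|^2)\ge 2^{2\epsilon n}$ controls, while every gate at your disposal, $U_S$ included, touches at most $n$ wires. A depth-$d$ circuit of fan-in-$n$ gates has light cones of size at most $n^d$, so the row target can depend on at most $n^d$ of the row's controls; having it compute the row-$\AND$ forces $n^d\ge m$, i.e.\ $d=\Omega(n/\log n)$. Hence the partial $\AND$s of width $\epsilon n$ simply cannot be assembled into the width-$m$ row-$\AND$ in $O(1)$ depth, and the substitution you describe does not yield a constant-depth $\QNC_S$ circuit. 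The paper's own sentence (``repeat this procedure $1/\epsilon$ times to implement the Toffoli gate on $n$ qubits'') only reaches fan-in $n$, not the exponential fan-in the rows of \Cref{GRID} demand, so this step is glossed over there as well; your proposal inherits the same gap rather than filling it.
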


\section{Quantum \texorpdfstring{$\MOD$}{MOD} gates are powerful even on their own}
In this section we show a strengthening of a result of \cite{qacc}. In particular they show that for any fixed $q > 1$ $\MOD_q$ gates, $\wedge_n$ gates, single- and two-qubit gates can be leveraged to implement Fanout in constant depth and polynomial size: 
\begin{theorem}[Theorem 4.6 of \cite{qacc}]
    For $p \geq 2$, $\QAC^0[p] = \QAC^0_{wf}$.
\end{theorem}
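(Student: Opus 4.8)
The plan is to prove the two inclusions separately, the inclusion $\QAC^0[p] \subseteq \QAC^0_{wf}$ being the routine one. For that inclusion it suffices to simulate each gate available to a $\QAC^0[p]$ circuit---one- and two-qubit gates, generalized Toffoli gates $\land_n$, and $\MOD_p$ gates---by a constant-depth, polynomial-size $\QNC_{wf}$ circuit, since composing constantly many constant-depth circuits keeps the depth constant. One- and two-qubit gates are free; generalized Toffoli gates lie in $\QAC^0_{wf}$ because $\QNC^0_{wf}=\QAC^0_{wf}$ \cite{hs_fanout, tt_fanout}; and for the $\MOD_p$ gate I would use that Høyer and Špalek showed $\QNC^0_{wf}$ circuits compute every symmetric Boolean function---for instance via a Fanout-based constant-depth quantum Fourier transform and in-place addition that writes the Hamming weight $|x|$ into an $O(\log n)$-qubit register, from which a constant-depth (hence Toffoli) circuit extracts $\Mod_p^n(x)$. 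XOR-ing this onto the target qubit and uncomputing the register realizes the $\MOD_p$ gate in $\QAC^0_{wf}$, completing the inclusion.

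The interesting inclusion is $\QAC^0_{wf}\subseteq\QAC^0[p]$: we must build Fanout from $\MOD_p$ gates, with generalized Toffoli and one- and two-qubit gates already in hand. By \Cref{EQUIV} it is enough to prepare the cat state $\frac{1}{\sqrt 2}(\ket{0^n}+\ket{1^n})$ from $\ket{0^n}$, together with its inverse, in constant depth over this gate set. When $p$ is even this is easy: $|x|$ is even iff $|x|\bmod p$ lies in $\{0,2,\dots,p-2\}$, so computing the $p/2=O(1)$ indicators $\mathbb I_{|x|\equiv 2j\ (\mathrm{mod}\ p)}$ with $\MOD$ gates (the $\ell$-shifted variant $\MOD_{n,p,\ell}$ is obtained from $\MOD_{n,p}$ by padding the input with $O(1)$ fixed ones), OR-ing them, and uncomputing yields the $\MOD_2$ gate, hence $\Parity_n$ up to an $X$ gate, which is locally equivalent to Fanout \cite{moore:1999}. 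The content is in the case $p$ odd, where a single $\MOD_p$ gate exposes only $|x|$ modulo an odd number, spread evenly across both parity classes, so no one-shot reduction to $\Parity$ can work. Here I would instead bootstrap the modulus: show that $\MOD_{n,p^2}$---and then $\MOD_{n,p^k}$ for any constant $k$ by iteration---can be built from $\MOD_p$ gates in constant depth, and take $k$ large enough that $p^k>n$. The $O(n)$ gates $\MOD_{n,p^k,\ell}$ for $\ell=0,\dots,p^k-1$ then determine $|x|$ exactly, so we write $|x|$ into a register, extract $|x|\bmod 2$, and appeal to \cite{moore:1999} once more. Once Fanout is placed in $\QAC^0[p]$, the full collapse follows since $\QAC^0_{wf}$ already contains every remaining gate.

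The crux---and the step I expect to be genuinely hard---is the odd-$p$ bootstrap from ``count mod $p$'' to ``count mod $p^2$'' (equivalently, computing $\lfloor|x|/p\rfloor\bmod p$) in constant depth \emph{without} Fanout, the very gate we are trying to construct. A naive approach that tries to compact the $1$'s of $x$ into a counter register is exactly the kind of operation that needs Fanout-like power, so it cannot work; I expect the right tool is the one from \cite{qacc}: carry the residue in the phase using a $p$-th root of unity $e^{2\pi i/p}$, realize controlled versions of these phase operations (which $\MOD_p$ together with Toffoli gates can do with fresh ancillas), and invert a $\mathbb Z_p$-Fourier transform on a small register to read off the next base-$p$ digit. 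One final bookkeeping point, anticipating the subsequent theorem that removes the Toffoli gates: throughout the odd-$p$ construction one should track exactly where a generalized Toffoli gate is essential versus where a one- or two-qubit gate suffices, since that is precisely what must be eliminated to upgrade $\QAC^0[p]=\QAC^0[q]$ to $\QNC^0[p]=\QNC^0[q]$.
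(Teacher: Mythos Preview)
Your easy inclusion and the even-$p$ case are fine. The gap is in your odd-$p$ strategy: you propose to bootstrap $\MOD_p$ to $\MOD_{p^k}$ ``for any constant $k$ by iteration'' and then ``take $k$ large enough that $p^k>n$.'' These two requirements are incompatible. For $p^k>n$ you need $k=\Theta(\log n)$, so the iterated construction has depth $\Theta(\log n)$, not $O(1)$. With a genuinely constant $k$ you only ever learn $|x|\bmod p^k$ for a fixed odd modulus, which still says nothing about parity. Your fallback idea at the end---use $p$-th-root-of-unity phases and a $\mathbb Z_p$-Fourier transform to ``read off the next base-$p$ digit''---inherits the same problem: extracting digits one at a time is again an $\Omega(\log n)$-depth process.

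The construction from \cite{qacc} that the paper reproduces avoids computing $|x|$ (or any higher residue) altogether. The key identity is the qudit analogue of the Fanout/Parity duality,
\[
\M_{n,p}=(Q_p^{\dagger})^{\otimes(n+1)}\,\Fanout_{n,p}\,Q_p^{\otimes(n+1)},
\]
so that simulating the qudit mod-$p$ sum gate $\M_{n,p}$ in constant depth yields the qudit fanout $\Fanout_{n,p}$ in one shot. One encodes each $p$-level qudit into $p$ qubits via the one-hot map $E\ket{j}=\bigotimes_{k}\ket{\delta_{k,j}}$; on this encoding, $\M_{n,p}$ reduces to computing, for each fixed $k\in\{0,\dots,p-1\}$, the count $s_k=\sum_i\delta_{k,j_i}\bmod p$, which is exactly what $p$ parallel qubit $\MOD_{n,p,\ell}$ gates do. The remaining arithmetic ($\sum_k k\,s_k\bmod p$) involves $O(1)$ values of $O(1)$ bits and is $\QNC^0$. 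Applying the resulting encoded $\Fanout_{n,p}$ to $E(\alpha\ket 0+\beta\ket 1)\otimes E\ket{0}^{\otimes n}$ directly produces $\alpha\ket{0}^{\otimes\Theta(n)}+\beta\ket{1}^{\otimes\Theta(n)}$ on a subset of the qubits, which by \Cref{cat-to-fanout} gives qubit Fanout. The point you were missing is that the Fourier conjugation converts mod-$p$ counting into fanout \emph{in a single layer}, with no iteration and no need to recover $|x|$.
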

However, the family of $\QAC^0[p]$ circuits they construct does not actually require $\wedge_n$ gates i.e. the family of circuits they construct to compute $\Fanout_n$ is actually a $\QNC^0[p]$ family. This immediately yields the collapse of all $\QNC^0[p]$:
\begin{theorem}{\label{NCMOD}}
    $\QNC^0[p] = \QNC^0_{wf}$ for all primes $p$.
\end{theorem}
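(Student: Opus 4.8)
The plan is to revisit the construction of \cite{qacc} establishing $\QAC^0[p] = \QAC^0_{wf}$ and observe that the $\wedge_n$ gates it invokes are, in fact, dispensable in one of the two directions. The inclusion $\QNC^0_{wf} \subseteq \QNC^0[p]$ is the routine direction: it suffices to build Fanout (equivalently, by \Cref{EQUIV}, an $n$-qubit cat state) from $\MOD_p$ gates and one- and two-qubit gates. I would recall that a $\MOD_{n,p,\ell}$ gate applied in the phase, with target $\ket{-}$, computes the function $\mathbb{I}_{|x| \equiv \ell}$ as a $\pm 1$ phase; summing or combining such phase gates for the various residues $\ell$ lets one isolate a single Hamming slice or, more usefully, synthesize the phase $(-1)^{|x|}$ and hence a Parity-in-phase gate $Z^{\otimes n}$-like operation on the relevant subsystem. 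The key point, which I would extract by inspecting the circuit of \cite{qacc} line by line, is that all the ``gluing'' operations there that look like they need a generalized Toffoli actually only need CNOTs and $\MOD_p$ gates: the cat-state (nekomata) is assembled directly, without ever invoking an $\AND$ of unboundedly many bits.

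The heart of the argument is therefore a careful audit of the \cite{qacc} construction to certify that every multi-qubit gate appearing in the Fanout-from-$\MOD_p$ circuit is either a one- or two-qubit gate or a $\MOD_p$ gate (for the fixed modulus $p$), never a $\wedge_n$. Concretely, I would (i) reproduce their reduction from Fanout to cat-state preparation in a form that only uses $\QNC^0[p]$ resources — this is legitimate because \Cref{EQUIV}'s reductions are themselves $\QAC^0$, but here I need the stronger statement that the specific cat-state route in \cite{qacc} avoids Toffoli; (ii) exhibit their constant-depth circuit that prepares an (approximate or exact) cat state from $\MOD_p$ gates, checking that the auxiliary arithmetic — accumulating Hamming weights mod $p$ across registers, extracting residues into phases, uncomputing garbage — is carried out entirely with CNOTs and further $\MOD_p$ gates; and (iii) conclude $\QNC^0_{wf} \subseteq \QNC^0[p]$. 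The reverse inclusion $\QNC^0[p] \subseteq \QNC^0_{wf}$ already follows from \cite{hs_fanout, tt_fanout}: Fanout gives generalized Toffoli ($\QNC^0_{wf} = \QAC^0_{wf}$), and with Toffoli one computes any $\MOD_q$ via the standard symmetric-function construction, so $\QNC^0[p] \subseteq \QAC^0_{wf} = \QNC^0_{wf}$. Combining the two inclusions for all primes $p$ gives $\QNC^0[p] = \QNC^0_{wf}$, and in particular $\QNC^0[p] = \QNC^0[q]$ for all primes $p,q$.

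The main obstacle I anticipate is step (ii): it is not a priori obvious that the Hamming-weight bookkeeping in \cite{qacc} can be done without a generalized Toffoli, since a naive implementation of ``is the total weight $\equiv \ell \pmod p$'' across many sub-registers might seem to require an unbounded fan-in $\AND$ to detect a particular residue pattern. The resolution should be that one never needs to \emph{detect} a residue with an $\AND$; instead one works in the phase throughout — each $\MOD_{p,\ell}$ contributes a phase $(-1)^{\mathbb{I}_{|x|\equiv\ell}}$ directly onto a $\ket{-}$ ancilla, and a suitable linear combination of these residue-indicator phases over $\ell \in \{0,\dots,p-1\}$ reconstructs exactly the cat-state-enabling phase structure, with all intermediate registers linear (hence CNOT-manageable) functions of the input. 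I would double-check this by re-deriving the relevant phase identity explicitly for small $p$ and then in general, and by confirming that the uncomputation of the weight registers is itself linear. A secondary, more mechanical concern is bookkeeping the depth and size bounds through the audited construction to ensure they remain $O(1)$ and $\poly(n)$ respectively — but since we are only rearranging \cite{qacc}'s own circuit and deleting (rather than adding) gate types, this should go through without difficulty.
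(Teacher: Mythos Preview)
Your high-level plan---audit the \cite{qacc} construction and verify that no $\wedge_n$ gates are needed to build Fanout from $\MOD_p$---is exactly what the paper does. But your anticipation of the underlying mechanism is off, and the specific route you sketch would not work.

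The phase-based idea you describe (use $\MOD_{n,p,\ell}$ with target $\ket{-}$ to accumulate $\pm 1$ phases and thereby ``synthesize the phase $(-1)^{|x|}$ and hence a Parity-in-phase $Z^{\otimes n}$-like operation'') is a dead end: $Z^{\otimes n}$ is already a depth-$1$ product of single-qubit gates, so producing it yields no leverage whatsoever toward computing Parity as a \emph{bit}. The whole difficulty is exactly the passage from phase-Parity to bit-Parity, which is what requires Fanout in the first place. There is also no way to linearly combine the $\pm 1$ indicator phases $(-1)^{\mathbb{I}_{|x|\equiv\ell}}$ over $\ell\in\{0,\dots,p-1\}$ to obtain $(-1)^{|x|}$ for odd $p$: mod-$p$ residue information simply does not determine parity.

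The actual \cite{qacc} mechanism, which the paper reproduces, is the \emph{qudit Fourier duality}. One encodes a $p$-level qudit into $p$ qubits via the one-hot map $E\ket{j}=\bigotimes_{k}\ket{\delta_{k,j}}$. Over qudits of local dimension $p$, the analogue of the Hadamard conjugation $\Parity = H^{\otimes} \Fanout\, H^{\otimes}$ is the identity $\M_{n,p} = (Q_p^\dag)^{\otimes(n+1)} \Fanout_{n,p}\, Q_p^{\otimes(n+1)}$ (Lemma~\ref{MODA}). Since $Q_p$ acts on a single encoded qudit, i.e.\ on $p=O(1)$ qubits, it is free. The remaining task is to implement the encoded $\M_{n,p}$ using only qubit $\MOD_{n,p}$ gates and $O(1)$-qubit gates. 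This is done by writing $\sum_i j_i \equiv \sum_{k=0}^{p-1} k\cdot s_k \pmod p$ where $s_k=\sum_i \delta_{k,j_i}\bmod p$; each $s_k$ is computed into a fresh encoded qudit by $p$ parallel $\MOD_{n,p,\ell}$ gates (each just a $\MOD_{n,p}$ with $O(1)$ padded ancillas), and the final sum $\sum_k k s_k$ involves only $O(1)$ encoded digits and is therefore a $\QNC^0$ (indeed $\NC^0$) computation. No $\wedge_n$ appears anywhere.

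Finally, you correctly flagged that the generic nekomata-to-Parity reduction in \Cref{EQUIV} is only a $\QAC^0$ reduction (it uses an $\OR$ gate). The paper sidesteps this by invoking Moore's stronger \Cref{cat-to-fanout}: the encoded qudit Fanout, applied to the appropriate initialization, maps $(\alpha\ket{0}+\beta\ket{1})\ket{0^{\,\cdots}}$ to $(\alpha\ket{0^{2(n+1)}}+\beta\ket{1^{2(n+1)}})\ket{0^{\,\cdots}}$, which is exactly Task~\ref{step:moore_tricky} there, and that proposition converts this to genuine qubit Fanout using only Hadamard and $\CNOT$. This is the missing piece that resolves your ``main obstacle.'' (Incidentally, you label $\QNC^0_{wf}\subseteq\QNC^0[p]$ the ``routine'' direction; it is the hard one---the routine direction is $\QNC^0[p]\subseteq\QNC^0_{wf}$, which you correctly attribute to \cite{hs_fanout,tt_fanout}.)
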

Combined with the results of \cite{hs_fanout} and \cite{tt_fanout} we have an even larger collapse of constant-depth circuit classes:
$$
\QNC^0[p] = \QNC^0_{wf} = \QAC^0_{wf} = \QAC^0[q] = \QTC^0_{wf}
$$
for all $p, q \geq 2$. Before showing and analyzing the construction we will introduce some preliminaries.
\subsection{Simulating qudit arithmetic in \texorpdfstring{$\QNC^0[p]$}{QNC\^[p]}}
By a proposition of Moore, in order to inplement $\Fanout_n$ it actually suffices to construct a circuit which behaves like $\Fanout_n$ when all but one qubit are set to $\ket{0}$:
\begin{proposition}[Proposition 1 of \cite{moore:1999}]{\label{cat-to-fanout}}
    In any class of quantum circuits which includes Hadamard and $\CNOT$-gates, the follow are equivalent in constant depth:
    \begin{enumerate}
        \item \label{step:moore_tricky} It is possible to map $(\alpha\ket{0} + \beta\ket{1})\ket{0^{n - 1}}$ to $\alpha\ket{0^n} + \beta\ket{1^n}$ and from $\alpha\ket{0^n} + \beta\ket{1^n}$ to $(\alpha\ket{0} + \beta\ket{1})\ket{0^{n - 1}}$ for all $|\alpha|^2 + |\beta|^2 = 1$ 
        \item $\Fanout_n$ can be implemented with at most $n - 1$ ancilla qubits
        \item $\P_n$ can be implemented with at most $n - 1$ ancilla qubits
    \end{enumerate}
\end{proposition}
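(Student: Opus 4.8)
The plan is to prove the three equivalences by exhibiting explicit constant-depth reductions using only Hadamard and $\CNOT$ gates (plus one invocation of whatever primitive is hypothesized); by symmetry it suffices to show $(2)\Leftrightarrow(3)$, then $(2)\Rightarrow(1)$, then $(1)\Rightarrow(2)$. For $(2)\Leftrightarrow(3)$ I would conjugate by Hadamards: a $\CNOT$ from control $c$ to target $t$ conjugated by $H_c\otimes H_t$ is the $\CNOT$ from $t$ to $c$, and since $\Fanout_n$ is the product of the $n$ commuting $\CNOT$s sharing a common control while $\Parity_n$ is the product of the $n$ commuting $\CNOT$s sharing a common target, one gets $\Parity_n = H^{\otimes(n+1)}\,\Fanout_n\,H^{\otimes(n+1)}$; this conjugation adds only two single-qubit layers and no ancillas, so the two gates are freely interconvertible in constant depth. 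For $(2)\Rightarrow(1)$: apply a Fanout gate with the distinguished qubit as control and the $n-1$ zero-initialized qubits as targets, sending $(\alpha\ket0+\beta\ket1)\ket{0^{n-1}}$ to $\alpha\ket{0^n}+\beta\ket{1^n}$; as Fanout is its own inverse, a second application realizes the reverse map, so both directions of $(1)$ come for free (and $(3)\Rightarrow(1)$ then follows through $(2)$).

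The substantive direction is $(1)\Rightarrow(2)$. To apply $\Fanout_n$ to $\ket b\otimes\ket{x_1,\dots,x_n}$ with $\ket b=\alpha\ket0+\beta\ket1$, I would append $n-1$ fresh ancillas in $\ket{0^{n-1}}$, run the forward map of $(1)$ on the target qubit together with these $n-1$ ancillas to obtain the cat-correlated register $\alpha\ket{0^n}+\beta\ket{1^n}$ on those $n$ qubits (tensored with the untouched data $\ket{x_1,\dots,x_n}$), and then in one parallel layer apply a $\CNOT$ from the $j$-th qubit of that $n$-qubit register onto $x_j$ for each $j$. On the $\ket{0^n}$ branch nothing changes and on the $\ket{1^n}$ branch every data bit flips, so the global state becomes $\alpha\,\ket{0^n}\ket{x} + \beta\,\ket{1^n}\ket{\bar x}$. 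Finally I would run the inverse map of $(1)$ on the same $n$-qubit register; by linearity the (now data-entangled) amplitudes $\alpha,\beta$ are carried along, the $n-1$ ancillas are restored to $\ket{0^{n-1}}$, and the target qubit together with the data register is left in $\alpha\ket0\ket{x}+\beta\ket1\ket{\bar x}=\Fanout_n\ket b\ket{x}$. The resulting circuit has depth (depth of the $(1)$-primitive) $+\,1\,+$ (depth of its inverse) $=O(1)$ and uses exactly the promised $n-1$ ancillas.

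I expect the only real obstacle to be the bookkeeping in this last step: one must notice that the entanglement left between the cat register and the data register after the $\CNOT$ layer is precisely of the form that the uncreation map of $(1)$ leaves invariant on the data qubits, so that a single application of that map simultaneously disentangles the ancillas and installs the correct $b$-controlled flip on the data. This is exactly why $(1)$ is phrased with both the cat-creation map and its inverse, and the argument does not go through with only one of them. The remaining checks are routine: that no ancillas beyond the $n-1$ padding qubits of the $(1)$-primitive are needed, and that the Hadamard and $\CNOT$ glue layers contribute only constant depth given that $H$ and $\CNOT$ are in the gate set by hypothesis.
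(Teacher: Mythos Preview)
Your proof is correct and follows essentially the same strategy as the paper's. The only cosmetic difference is that for the substantive direction you prove $(1)\Rightarrow(2)$ directly by sandwiching a layer of $\CNOT$s between the cat-creation map and its inverse, whereas the paper proves $(1)\Rightarrow(3)$ by sandwiching a layer of $\CZ$s (together with two Hadamards on the $b$-qubit); these two constructions are Hadamard-dual to one another and use the identical ``create cat, interact in parallel with data, uncreate cat'' template with the same $n-1$ ancillas.
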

Hence, constructing a unitary $U$ which satisfies $U(\alpha\ket{0} + \beta\ket{1})\otimes\ket{0^{n - 1 + a(n)}} = (\alpha\ket{0^n} + \beta\ket{1^n})\ket{0^{a(n)}}$ for any single-qubit state $\alpha\ket{0} + \beta\ket{1}$ will result in the ability to compute Fanout; this is exactly what the construction does.

First, for a fixed prime $p$ consider the qudit generalizations of the Parity and Fanout gates for local dimension $p$:
\begin{align*}
    \M_{n, p}\ket{b}\ket{x_1x_2\cdots x_n} &= \ket{b - |x| \mod{p}}\ket{x_1x_2\cdots x_n}\\
    \Fanout_{n, p}\ket{b}\ket{x_1x_2\cdots x_n} &= \ket{b}\ket{(x_1 + b \mod{p}), (x_2 + b \mod{p}), \ldots, (x_n + b \mod{p})}
\end{align*}
where $x_1, \ldots, x_n, b \in \{0, \dots p - 1\}$.

Additionally, consider the following single-qudit gate:
\begin{align*}
    Q_p\ket{b} = \frac{1}{\sqrt{p}}\sum_{j = 0} \omega^{jb}\ket{j}
\end{align*}
where $\omega = e^{2i\pi/p}$. For example, when $p = 2$, $Q_p = H$, and $\Fanout_{n, p}$ and $\M_{n, p}$ are the usual Fanout and Parity (up to an $X$ gate on the output qubit) gates for qubits, respectively.
\begin{lemma}[Proposition 4.2 of \cite{qacc}]{\label{MODA}}
    $\M_{n, p} = (Q_p^{\dag})^{\otimes (n + 1)}\Fanout_{n, p}Q_p^{\otimes (n + 1)}$
\end{lemma}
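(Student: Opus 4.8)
The plan is to verify the identity directly on computational-basis states, since both sides are linear and it suffices to check agreement on a spanning set. Fix $b \in \{0,\dots,p-1\}$ and $x = x_1\cdots x_n \in \{0,\dots,p-1\}^n$ and track $\ket{b}\ket{x}$ through the three layers of the right-hand side. First, $Q_p^{\otimes(n+1)}$ spreads it into $\tfrac{1}{\sqrt{p^{n+1}}}\sum_{c,y_1,\dots,y_n}\omega^{cb + \sum_i y_i x_i}\ket{c}\ket{y_1\cdots y_n}$. Then $\Fanout_{n,p}$ sends $\ket{c}\ket{y_1\cdots y_n}$ to $\ket{c}\ket{(y_1+c)\cdots(y_n+c)}$ with arithmetic mod $p$, and finally $(Q_p^{\dag})^{\otimes(n+1)}$ is applied. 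Collecting the amplitude of an output basis state $\ket{b'}\ket{x'_1\cdots x'_n}$, the exponent of $\omega$ factors as $c\,(b - b' - \sum_i x'_i) + \sum_i y_i\,(x_i - x'_i)$, so summing each $y_i$ over $\mathbb Z_p$ contributes $p\,\mathbb{I}_{x_i = x'_i}$ and forces $x' = x$, after which summing $c$ over $\mathbb Z_p$ contributes $p\,\mathbb{I}_{b' \equiv b - |x| \bmod p}$. The surviving normalization is $p^{-(n+1)}\cdot p^{n}\cdot p = 1$, so the right-hand side maps $\ket{b}\ket{x}$ to $\ket{b - |x| \bmod p}\ket{x}$, which is exactly $\M_{n,p}\ket{b}\ket{x}$.

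Alternatively --- and perhaps more transparently --- one can argue structurally. Write $\Fanout_{n,p} = \prod_{i=1}^{n}\mathrm{CADD}_{0\to i}$, the product of the $n$ commuting controlled-shift gates that add the value of control qudit $0$ into target qudit $i$. Using the elementary single-qudit conjugations $Q_p^{\dag} X_p Q_p = Z_p^{-1}$ and $Q_p^{\dag} Z_p Q_p = X_p$ (where $X_p\ket{j} = \ket{j+1}$ and $Z_p = \sum_j\omega^j\proj{j}$), a short calculation shows that conjugating a single $\mathrm{CADD}_{0\to i}$ by $Q_p$ on both involved qudits turns it into $\mathrm{CADD}^{-1}_{i\to 0}$, the gate that subtracts the value of qudit $i$ from qudit $0$. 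Since conjugation by a fixed unitary is multiplicative and $\mathrm{CADD}_{0\to i}$ touches only qudits $0$ and $i$ (so the remaining tensor factors of $Q_p^{\otimes(n+1)}$ cancel), we get $(Q_p^{\dag})^{\otimes(n+1)}\Fanout_{n,p}Q_p^{\otimes(n+1)} = \prod_{i=1}^n \mathrm{CADD}^{-1}_{i\to 0}$, which sends $\ket{b}\ket{x}$ to $\ket{b - x_1 - \cdots - x_n \bmod p}\ket{x} = \M_{n,p}\ket{b}\ket{x}$.

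I do not expect a genuine obstacle here: both routes are essentially bookkeeping. In the direct computation the only thing to watch is that the normalization $p^{-(n+1)}$ from the two $Q_p$ layers cancels exactly against the $p^{n}$ from the $n$ target-register sums and the single $p$ from the control-register sum; in the structural argument the only subtlety is getting the signs right in the shift--clock conjugation so that one genuinely lands on ``subtract'' rather than ``add.'' I would present the direct computation as the main proof and record the structural viewpoint as a remark, since it also explains \emph{why} the qudit Fanout and Mod gates are Fourier-dual, generalizing the familiar $p=2$ case where $Q_2 = H$ conjugates Fanout to Parity.
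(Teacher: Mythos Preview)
Your primary proof is correct and is essentially the paper's own argument: the paper also fixes a basis state $\ket{b}\ket{x}$, applies $Q_p^{\otimes(n+1)}$ and then $\Fanout_{n,p}$, and after a change of variables $y \mapsto y^{(-j)}$ recognizes the result as $Q_p^{\otimes(n+1)}\M_{n,p}\ket{b}\ket{x}$; your version simply organizes the same computation by expanding the final $(Q_p^{\dag})^{\otimes(n+1)}$ and summing out the $y_i$'s and $c$ to isolate the surviving basis state.

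Your structural alternative---decomposing $\Fanout_{n,p}$ as a product of commuting $\mathrm{CADD}_{0\to i}$ gates and using the clock/shift conjugation $Q_p^{\dag}X_pQ_p = Z_p^{-1}$, $Q_p^{\dag}Z_pQ_p = X_p$ to turn each into $\mathrm{CADD}^{-1}_{i\to 0}$---does not appear in the paper. It is a genuinely different route: the paper's calculation handles all $n$ targets at once inside a single sum, whereas you reduce to the two-qudit identity $(Q_p^{\dag})^{\otimes 2}\,\mathrm{CADD}_{0\to i}\,Q_p^{\otimes 2} = \mathrm{CADD}^{-1}_{i\to 0}$ and then multiply. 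The payoff of your version is conceptual clarity (it makes the Fourier duality between ``copy out'' and ``sum in'' explicit, generalizing the $p=2$ Hadamard conjugation of Fanout to Parity), at the cost of introducing the auxiliary $X_p,Z_p,\mathrm{CADD}$ notation; the paper's payoff is that it needs no extra notation and is self-contained in a single display.
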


Recall our goal: we want to use Mod-$p$ gate to simulate Fanout over qubits. While this seems somewhat challenging for qubits, it is trivial over qudits of local dimension $p$ by \Cref{MODA}. Therefore, our plan will be to pretend that we are in that setting by encoding a qudit using several qubits. Once we have set an encoding, we need encoded versions of the $\M_{n, p}$ and $Q_p$ gates in \Cref{MODA}. Encoding the $Q_p$ gate is easy---it's a gate of constant-size and each one of our encoded qudits will be of constant size, so any brute force encoding of $Q_p$ will do. The challenging step is to show that an encoded $\M_{n, p}$ gate is possible using (qubit) $\MOD_{n, p}$ gates. One of the key observations is that after we've applied the encoded Fanout, we will have accomplished Task~\ref{step:moore_tricky} of \Cref{cat-to-fanout}, and therefore, we can construct general Fanout over qubits.

\begin{proof}[Proof of Theorem \ref{NCMOD}]
    To start, define a linear encoding map $E \colon \mathbb C^p \to (\mathbb C^2)^{\otimes p}$ which maps from qudits of local dimension $p$ to a tensor product of $p$ qubits: 
    \begin{align*}
        E \ket{j} = \bigotimes_{k = 0}^{p - 1}\ket{\delta_{k, j}}
    \end{align*}
    for all $j \in \{0,\ldots, p-1\}$ and where $\delta_{k, j}$ denotes the Kronecker delta function. Note that $E$ is a linear map of full rank from the $p$-dimensional space spanned by $\{\ket j\}_{j = 0}^{p - 1}$ to the $p$-dimensional subspace of $(\mathbb{C}^2)^{\otimes n}$ spanned by$\bigg{\{}\bigotimes_{k = 0}^{p - 1}\ket{\delta_{k, j}}\bigg{\}}_{j = 0}^{p - 1}$. Throughout this construction we will only be working over qubits and not actually implementing $E$. Instead, we will be exploiting the equivalence of \Cref{MODA} by simulating qudit arithmetic with qubits. We introduce $E$ for the sake of describing this simulation method succinctly.
    
    Now, applying $Q_p$ to an encoded state on qubits amounts to implementing any $\Tilde{Q}_p$ which satisfies:
    $$
    \Tilde{Q}_p\ket{\delta_{0, j}}\ket{\delta_{1, j}}\cdots \ket{\delta_{p - 1, j}} = \frac{1}{\sqrt{p}}\sum_{k = 0}^{p - 1}\omega^{jk}\ket{\delta_{0, k}}\ket{\delta_{1, k}}\cdots \ket{\delta_{p - 1, k}}
    $$
    i.e., any unitary $\Tilde{Q}_p$ on $(\mathbb{C}^2)^{\otimes p}$ which respects the homomorphism induced by $E$. Since $p$ is fixed, any such $\Tilde{Q}_p$ operates on constantly many qubits and can be implemented in constant depth and size.
 
    Now, we must implement $\M_{n, p}$ on the encoded subspace using just $\MOD_{n, p}$ (and one- and two-qubit gates). Note that for any $j_1, \dots j_n \in \mathbb{F}_p$ their sum modulo $p$ can be decomposed
    \begin{align*}
        j_1 + \cdots + j_n &\equiv 0(\delta_{0, j_1} + \cdots + \delta_{0, j_n}) + 1(\delta_{1, j_1} + \cdots + \delta_{1, j_n}) + \cdots + (p - 1)(\delta_{p - 1, j_1} + \cdots + \delta_{p - 1, j_n}) \\
        &\equiv \sum_{k = 0}^{p - 1}k\bigg{(}\sum_{i = 1}^n \delta_{k, j_i} \bigg{)} \pmod{p}
    \end{align*}
    Let $s_k := \sum_{i = 1}^n \delta_{k, j_i} \pmod{p}$ be the number of $j_i$ terms equal to $k$ modulo $p$. Let's also define a family of generalized Mod-$p$ gates over qubits.
    For $\ell \in \{0, \dots p - 1\}$, recall $\MOD_{n, p, \ell}$ acts as
    \begin{align*}
        \MOD_{n, p, \ell}\ket{b}\ket{x_1, \dots x_n} = \ket{b \oplus \Mod_{n, p, \ell}(x)}\ket{x_1, \dots x_n}
    \end{align*}
    Notice that $\MOD_{n, p, \ell}$ can be implemented over qubits with an $\MOD_{n, p, 0}$ gate (the standard $\MOD_{n, p}$ gate on qubits) and $p - 1$ additional ancilla qubits, $p - \ell$ of which are set to $1$. We can use these gates to compute $s_k$ for a given $k \in \{0, 1 \dots p - 1\}$:
\begin{center}
    \begin{quantikz}
    \lstick{\ket{\delta_{k, j_1}}}  &  \ctrl{4} & \ctrl{5} & \ \ldots\  & \ctrl{7} & \rstick{\ket{\delta_{k, j_1}}}\\
    \lstick{\ket{\delta_{k, j_2}}} &\ctrl{3} & \ctrl{4} & \ \ldots\  & \ctrl{6} & \rstick{\ket{\delta_{k, j_2}}}\\
    \lstick{\vdots \ \ \ } & \ctrl{2} & \ctrl{3} & \ \ldots \  &\ctrl{5} & \rstick{\ \vdots \ \ }\\
    \lstick{\ket{\delta_{k, j_n}}}  & \ctrl{1} & \ctrl{2} & \ \ldots\  &\ctrl{4} & \rstick{\ket{\delta_{k, j_n}}}\\
    \lstick[wires=4]{$\ket{0}^{\otimes p}$}  &  \gate{\MOD_{n, p, 0}} & & \ \ldots \ & & \rstick[wires=4]{$E\ket{s_k}$}\\ 
      &  & \gate{\MOD_{n, p, 1}} & \ \ldots \  & &\\
     & & & \ \ldots \ & &\\
     & & & \ \ldots \ & \gate{\MOD_{n, p, p - 1}} &\\
  \end{quantikz}
\end{center}
    For $k \in \{0, \dots p - 1\}$ the above circuit, can be applied in parallel to the appropriate qubits in the encoding; namely those of the form $\ket{\delta_{k, j}}$ for fixed $k$. This leaves us with the state $E\ket{s_0} \otimes \cdots \otimes E\ket{s_{p - 1}}$. Recall that the sum over $\mathbb{F}_p$ we wish to compute is $\sum_{i = 1}^n j_i = \sum_{k = 0}^{p - 1}ks_k$; so, if we can compute each of $ks_k$ and sum over all $k$, we will be left with the desired sum. However, the product $ks_k$ is over elements of $\mathbb{F}_p$ and $p$ is fixed, so it is clear that this can be computed in $\QNC^0$ ($\NC^0$ even). Further, $\sum_{k = 0}^{p - 1}ks_k$ is a sum of constantly many integers each described by $p = O(1)$ bits, which is of course computable by a $\QNC^0$ ($\NC^0$ even) circuit. 
    
    For the sake of completeness, we will describe a circuit composed of permutations on $p$ qubits which compute $\sum_{k = 0}^{p - 1}ks_k$ in our encoded subspace. First let $U_{\sigma}$ be the permutation unitary which satisfies $U_{\sigma}E\ket{j} = E\ket{j - 1 \mod{p}}$. For any $k \in \{0, 1, \dots p - 1\}$, $U_{\sigma}^k$ can be implemented in constant depth via a sequence of at most $p^2$ swap gates. Since $U_{\sigma}^aE\ket{j} = E\ket{j - a}$ for all $a, j \in \Fp$, we can in series apply $U_{\sigma}^{ks_k}$ to $E\ket{b}$ to finally achieve
    \begin{align*}
        \bigg{(}\prod_{k = 0}^{p - 1}U_{\sigma}^{ks_k}\bigg{)}E\ket{b} &= U_{\sigma}^{\sum_{k = 0}^{p - 1}ks_k}E\ket{b}\\
        &= U_{\sigma}^{\sum_{i = 1}^{n}j_i}\ket{b}\\
        &= E\ket{b - \sum_{i = 1}^{n}j_i \mod{p}}
    \end{align*}
    Hence, this gives a circuit of depth $p^3 = O(1)$ and linear size for simulating $\M_{p, n}$ on the encoded qudits. After conjugating by $\Tilde{Q}_p$ gates on the appropriate groups of qubits the equivalence of Lemma \ref{MODA} shows that the entire circuit exactly implements fanout on the encoded qudits. 

    Let's now put all the pieces together to show that we can achieve Task~\ref{step:moore_tricky} of \Cref{cat-to-fanout}. Starting with the state $(\alpha \ket{0} + \beta\ket{1})\ket{0^{(p(n + 1) - 1)}}$, we want to get to an \emph{encoding} of $\alpha \ket{0} + \beta\ket{1}$ and the ancillary qubits. First, apply a $\CNOT$ gate from the first to second qubit, followed by an $X$ gate on the first to obtain the state
    \begin{align*}
        (\alpha\ket{10^{p - 1}} + \beta\ket{010^{p - 2}})\otimes\ket{0^{pn}}.
    \end{align*}
    Now apply an $X$ gate to the $(pj + 1)$st qubit for $j \in \{0, 1, \dots n - 1\}$ yielding the encoded state: $E(\alpha\ket{0} + \beta\ket{1})\otimes E\ket{0^n}$. Note that this is a state on $n + 1$ \emph{qudits} encoded by $p(n + 1)$ \emph{qubits}. After applying the previously described circuit which simulates $\Fanout_{n, p}$ on the encoded states the result is (up to a permutation of the qubits)
    \begin{align*}
        (\alpha\ket{0}^{\otimes 2(n + 1)} + \beta\ket{1}^{\otimes 2(n + 1)})\otimes \ket{0^{(p - 2)(n + 1)}}
    \end{align*}
    Now, via \Cref{cat-to-fanout} any such circuit is sufficient to compute Fanout (on qubits), thus $\QNC^0_{wf} \subseteq \QNC^0[p]$. It is shown in \cite{hs_fanout} and \cite{tt_fanout} that the reverse inclusion holds and it can be concluded that $\QNC^0_{wf} = \QNC^0[p]$.
\end{proof}

\begin{corollary}
    $\QNC^0[a] = \QNC^0_{wf}$ for all $a > 1$.
\end{corollary}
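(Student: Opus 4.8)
The plan is to reduce to the prime case, \Cref{NCMOD}. Write $a = p_1^{e_1}\cdots p_k^{e_k}$ and fix any prime divisor $p \mid a$ (one exists since $a>1$). It suffices to show that the qubit gate $\MOD_{n,p}$ can be realized by a constant-depth, $O(1)$-size circuit over $\MOD_{n',a}$ gates together with one- and two-qubit gates: substituting this gadget for every $\MOD_p$ gate in a $\QNC^0[p]$ circuit gives $\QNC^0[p] \subseteq \QNC^0[a]$, and then $\QNC^0_{wf} = \QNC^0[p] \subseteq \QNC^0[a]$ by \Cref{NCMOD}.

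To build the gadget, I would first recall, exactly as in the proof of \Cref{NCMOD}, that for each $\ell \in \{0,\dots,a-1\}$ the generalized gate $\MOD_{n,a,\ell}$ (which flips its target iff $|x|\equiv \ell \pmod a$) is obtained from a single $\MOD_{n',a}$ gate by appending $(a-\ell)\bmod a$ ancillas fixed to $\ket{1}$ to the input register; this costs $O(a)=O(1)$ extra qubits and does not increase depth. The key point is that the indicators $\Mod_{n,a,0}(x),\dots,\Mod_{n,a,a-1}(x)$ are mutually exclusive---for any fixed $x$ exactly one of them is $1$, namely the one with $\ell=|x|\bmod a$---so that for $T:=\{\ell : 0\le \ell<a,\ p\mid \ell\}$ we have $\bigoplus_{\ell\in T}\Mod_{n,a,\ell}(x) = \bigvee_{\ell\in T}\Mod_{n,a,\ell}(x) = \mathbb I_{p \mid |x|} = \Mod_{n,p}(x)$. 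Hence applying the $\MOD_{n,a,\ell}$ gadgets for all $\ell \in T$ in sequence onto one shared target qubit exactly implements $\MOD_{n,p}$, and since $|T| = a/p \le a/2 = O(1)$ this uses constant depth and constantly many $\MOD_a$-type gates. The reason no Toffoli (or other $\AND$/$\OR$) gates are needed---which is what keeps the construction inside $\QNC^0[a]$ rather than $\QAC^0[a]$---is precisely this mutual exclusivity, which lets us compute the disjunction over $T$ as a parity.

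For the reverse inclusion, $\MOD_{n,a}$ computes a symmetric Boolean function, and it is known that constant-depth poly-size Fanout circuits compute every symmetric Boolean function \cite{hs_fanout, tt_fanout}; replacing each $\MOD_a$ gate by such a subcircuit gives $\QNC^0[a] \subseteq \QNC^0_{wf}$, so $\QNC^0[a] = \QNC^0_{wf}$. The only routine bookkeeping is that each substituted gate adds $O(1)$ fresh ancillas and a constant-depth gadget, and gadgets replacing gates from the same layer act on disjoint qubit sets, so the qubit count stays polynomial and the depth stays constant; I expect this bookkeeping to be the least interesting part, with the mutual-exclusivity observation being the one real (small) idea.
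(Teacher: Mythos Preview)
Your proof is correct and follows the paper's high-level strategy exactly: pick a prime divisor $p\mid a$, simulate $\MOD_p$ by a constant-depth $\QNC^0[a]$ gadget, and then invoke \Cref{NCMOD}. The only difference is in the gadget itself. The paper simulates $\MOD_p$ with a \emph{single} $\MOD_a$ gate by fanning out each input bit $a/p$ times (a constant, so a depth-$O(1)$ tree of $\CNOT$s suffices) and observing that $a\mid (a/p)\,|x|$ iff $p\mid |x|$. You instead apply $a/p$ many $\MOD_{n,a,\ell}$ gates for $\ell\in\{0,p,\dots,a-p\}$ onto a common target and use mutual exclusivity to replace the disjunction by a parity. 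Both are valid constant-depth $\QNC^0[a]$ gadgets; yours makes the ``no Toffoli needed'' point more explicit, while the paper's is slightly leaner (one $\MOD_a$ call rather than $a/p$).
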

\begin{proof}
    This follows from the previous construction by taking $p$ to be any prime factor of $a$ and setting ancilla qubits appropriately or repeating the input $a/p$ times so that any $\MOD_a$ gate instead computes $\MOD_p$.
\end{proof}

\section{Acknowledgements}
 JM thanks Farzan Byramji for helpful discussions about threshold circuits. Part of this research was performed while the author was visiting the Institute for Mathematical and Statistical Innovation (IMSI), which is supported by the National Science Foundation (Grant No. DMS-1929348).
 
\bibliography{bibli}
\bibliographystyle{plain}
\appendix
\section{Deferred Proofs}\label{PROOFS}
\begin{proof}[Proof of \Cref{cat-to-fanout}]

    To see that (2) $\iff$ (3) it suffices to conjugate either gate by Hadamards i.e. $H^{\otimes(n + 1)}\Fanout_n H^{\otimes(n + 1)} = \Parity_n$. (2) $\implies$ (1) because $F_n$ satisfies the condition described in (1) exactly:
    \begin{align*}
        F_n(\alpha\ket{0} + \beta\ket{1})\ket{0^n} = \alpha\ket{0^n} + \beta\ket{1^n}
    \end{align*}
    Let $C$ be any unitary which satisfies (1). To see that (1) $\implies$ (3) we construct a circuit using $C$ and $C^{\dag}$ in essentially the same way that we did in the proof of \Cref{EQUIV}:
    \begin{center}
    \scalebox{0.9}{
    \begin{quantikz}
    \lstick{\ket{x_1}}  & & & \control{} &  & \ \ldots\  & & & &\rstick{\ket{x_1}}\\
    \lstick{\ket{x_2}}  & & & & \control{} & \ \ldots\  & & & & \rstick{\ket{x_2}}\\
    \lstick{\vdots \ \ \ } & & & & &\ \ldots \  & & & & \rstick{\ \vdots \ \ }\\
    \lstick{\ket{x_n}}  & & & & &\ \ldots\  & \control{} & & & \rstick{\ket{x_n}}\\
    \lstick{\ket{b}}  & \gate{H} &\gate[wires=4]{C} & \ctrl{-4} & & \ \ldots\  &  & \gate[wires=4]{C^{\dag}} & \gate{H} & \rstick{\ket{b \bigoplus_{i = 1}^n x_i}}\\
    \lstick[wires=3]{$\ket{0}^{n - 1}$} & & & & \ctrl{-4} & \ \ldots \ & & & & \rstick[wires=3]{$\ket{0^{n - 1}}$}\\ 
    & & &  &  & \ \ldots \  & & & & \\
     & & & & & \ \ldots \ & \ctrl{-4} & & & 
  \end{quantikz}
  }
\end{center}
To see that this circuit exactly computes parity note that after the first Hadamard gate is applied the ancilla bits are in the state $\frac{\ket{0} + (-1)^b\ket{1}}{\sqrt{2}}\ket{0^{n - 1}}$ and after applying $C$ we have $\frac{\ket{0^n} + (-1)^b\ket{1^n}}{\sqrt{2}}$. After the $\CZ$ gates are applied we are left with the state $\frac{\ket{0^n} + (-1)^{b + \sum_{i = 1}^n x_i}\ket{1^n}}{\sqrt{2}}$. Since $C(\alpha\ket{0} + \beta\ket{1})$After applying $C^{\dag}$ we are left with $\frac{\ket{0} + (-1)^{b + \sum_{i = 1}^n x_i}\ket{1}}{\sqrt{2}}\ket{0^{n - 1}}$, so the final Hadamard gate leaves the output qubit and the ancilla qubits in the state $\ket{b\bigoplus_{i = 1}^nx_i}\ket{0^{n - 1}}$.
\end{proof}
It should be noted that when our circuit $C$ has property $(1)$ it is in some sense stronger than the guarantee $C\ket{0^n} = \frac{\ket{0^n} + \ket{1^n}}{2}$. In the case of the latter it seems that an $\AND$ gate is required to compute parity with $C$ and $C^{\dag}$, which \Cref{cat-to-fanout} shows is not necessary when $C(\alpha\ket{0} + \beta\ket{1})\ket{0^{n - 1}} = \alpha\ket{0^n} + \beta\ket{1^n}$ for all one-qubit states $\alpha\ket{0} + \beta\ket{1}$.

To see that (2) $\implies$ (1) note that
$$
\Fanout_n \ket{+} \ket{0^n} = \frac{\ket{0^{n + 1}} + \ket{1^{n + 1}}}{\sqrt{2}}
$$
 Observe that if given access to some circuit $C$ (and $C^{\dag}$) which satisfies the weaker condition of preparing an exact $n$-nekomata i.e \begin{align*}C\ket{0^m} = \frac{\ket{0^n}\ket{\psi_0} + \ket{1^n}\ket{\psi_1}}{\sqrt{2}}\end{align*} one can construct a constant-depth $\QAC$ circuit which exactly computes Parity:
\begin{figure}
    \centering
    \input{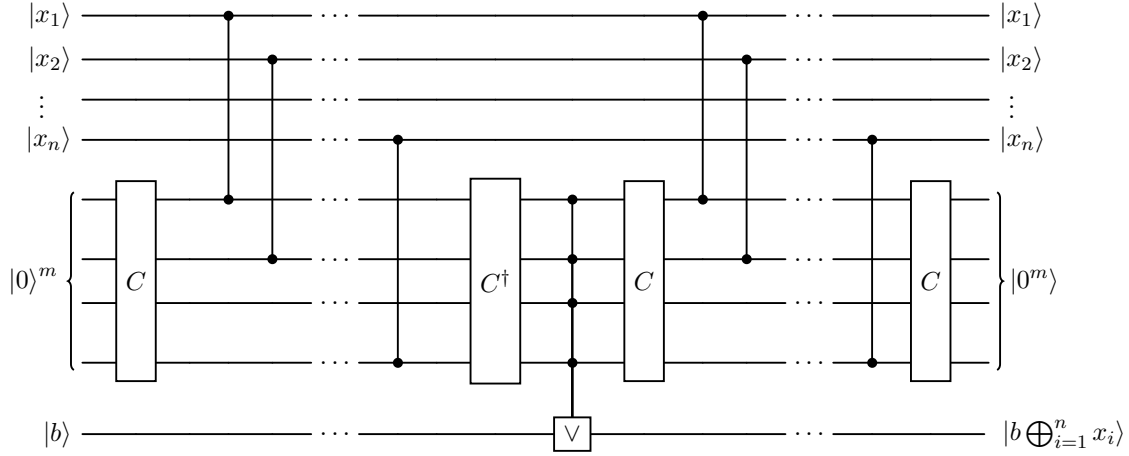}
    \caption{Computing $\Parity_n$ with a circuit $C$ which prepares an exact $n$-nekomata and its inverse $C^{\dag}$}
    \label{fig:nekomata_to_parity}
\end{figure}

Note that in the above circuit after the first layer of $\CZ$ gates are applied the state on the ancialla qubits is $\frac{\ket{0^n}\ket{\psi_0} + (-1)^{|x|}\ket{1^n}\ket{\psi_1}}{\sqrt{2}}$. When $|x|$ is even, nothing has happened, so $C^{\dag}$ will return the state on these registers to $\ket{0^m}$ and the $\lor$-gate will not change the final register. When $|x|$ is odd $\frac{\ket{0^n}\ket{\psi_0} + (-1)^{|x|}\ket{1^n}\ket{\psi_1}}{\sqrt{2}}$ is orthogonal to $C\ket{0^m}$, so after applying $C^{\dag}$ the resulting state is orthogonal to $\ket{0^m}$ which will always trigger the $\lor$-gate. The second half of the circuit uncomputes returning the ancillary registers to $\ket{0^m}$. Thus, the final register is always left in the state $\ket{b \bigoplus_{i = 1}^nx_i}$ - thus, (up to an $X$ gate) this circuit exactly computes parity.

Finally, we prove \Cref{approx-NEK-PAR-FAN} and in particular that the above circuit approximates $\Parity_n$ when $C$ is replaced with any $U$ which produces an $\epsilon$-approximate $n$-nekomata when applied to the all zeros state.
\begin{proof}[Proof of \Cref{approx-NEK-PAR-FAN}]
    Let $U$ be a unitary on $m$ qubits such that $U\ket{0^m} = \ket{\psi}$ is an $\epsilon$-approximate $n$-nekomata and let $\ket{\nu} = \frac{\ket{0^n}\ket{\psi_0} + \ket{1^n}\ket{\psi_1}}{\sqrt{2}}$ be the $n$-nekomata on $m$ qubits which maximizes $|\braket{\nu}{\psi}|^2$. We can write $\ket{\psi} = \sqrt{1 - \epsilon}\ket{\nu} + \sqrt{\epsilon}\ket{\nu^{\perp}}$ for some $\ket{\nu^{\perp}}$ which is orthogonal to $\ket{\nu}$. In the circuit shown in \Cref{fig:nekomata_to_parity} observe that after the first layer of $\CZ$ gates are applied the state on the ancilla registers is 
    $$
    \ket{\psi_{-}}\sqrt{1 - \epsilon}\frac{\ket{0^n}\ket{\psi_0} + (-1)^{|x|}\ket{1^n}\ket{\psi_1}}{\sqrt{2}} + \sqrt{\epsilon}\ket{\nu_{-}^{\perp}}
    $$
    For some $\ket{\nu^{\perp}_{-}}$. Observe that when $|x|$ is even then $|\braket{\psi_{-}}{\psi}|^2 \geq 1 - 2\epsilon$ and when $|x|$ is odd then $|\braket{\psi_{-}}{\psi}|^2 \leq \epsilon$. In the former case $C^{\dag}\ket{\psi_{-}}$ will have fidelity at least $1 - 2\epsilon$ with $\ket{0^m}$, so after uncomputation we are left with $\ket{x, 0^m}\ket{\omega_b}$ where $|\braket{b}{\omega_b}|^2 \geq 1 - 2\epsilon$. Similarly, when $|x|$ is odd $C^{\dag}\ket{\psi_{-}}$ has fidelity at most $\epsilon$ with $\ket{0^m}$ and after uncomputation we are left with $\ket{x, 0^m}\ket{\omega_b}$ where  $|\braket{b}{\omega_b}|^2 \leq \epsilon$. Thus, on any input $\ket{x, b}$ the state produced by this circuit has fidelity at least $1 - 2\epsilon$ with $F_n\ket{x, b}$ - equivalently, the $\ell_2$-distance is at most $2\epsilon$ meaning that the unitary implemented by this circuit, $V$, satisfies $\|V  - \Parity_n\|_{\mathrm{op}}\leq 2\epsilon$. Thus, (1) $\implies$ (3).

    To see that (3) $\implies$ (2) suppose that the unitary $U$ satisfies $\|U - \Parity_n\|_{\mathrm{op}} \leq \epsilon$. Then,
    \begin{align*}
        \|H^{\otimes(n + 1)}UH^{\otimes(n + 1)} - \Fanout_n\|_{\mathrm{op}} &= \|H^{\otimes(n + 1)}(U - \Parity_n)H^{\otimes(n + 1)}\|_{\mathrm{op}} \leq \|U - \Parity_n\|_{\mathrm{op}} \leq \epsilon
    \end{align*}
    For (2) $\implies$ (1) let $\ket\psi = \Fanout_n\ket{+}\ket{0^n} = \frac{\ket{0^n} + \ket{1^n}}{\sqrt{2}}$ and $\ket\phi = U\ket{+}\ket{0^n}$. Note that if $\|U - \Fanout_n\|_{\mathrm{op}} \leq \epsilon$ then
    \begin{align*}
        \|\ket\phi - \ket{\psi}\|_2 \leq \|\ket{+}\ket{0^{n}}\|_2 \|U - \Fanout_n\|_{\mathrm{op}} \leq \epsilon
    \end{align*}
    So, 
    \begin{align*}
        \|\ket{\phi} - \ket{\psi}\|_2 &= \sqrt{2 - \braket{\psi}{\phi} - \braket{\phi}{\psi}} \leq \epsilon \implies |\braket{\psi}{\phi}|^2 \geq 1 - \epsilon^2 - \epsilon^4/4
    \end{align*}
    Thus, $\ket{\phi}$ is an $O(\epsilon)$-approximate nekomata.
\end{proof}

\begin{proof}{Proof of \Cref{MODA}}
    This can be seen via a direct computation:  \begin{align*}
       \Fanout_{n, p}Q^{\otimes{n + 1}}\ket b \ket x &=  \Fanout_{n, p} \bigg{(}\frac{1}{\sqrt{p}}\sum_{j = 0}^{p - 1} \omega^{bj}\ket{j} \bigg{)} \otimes \bigg{(}\frac{1}{\sqrt{p^n}}\sum_{y \in \Fp^n} \omega^{\langle x, y \rangle}\ket{y} \bigg{)}
    \end{align*}
    Here $\langle x , y \rangle$ denotes the inner product over vectors in $\Fp^n$: $\langle x , y \rangle = \sum_{j = 1}^p x_jy_j \mod{p}$. For $y \in \Fp^n$ and $j \in \Fp$ we will use $y^{(j)}$ to denote the string obtained by adding $j$ to every entry of $y$ i.e. $F_p\ket{j}\ket y = \ket j\ket{y^{(j)}}$. Now we can see that
    \begin{align*}
         \bigg{(}\frac{1}{\sqrt{p}}\sum_{j = 0}^{p - 1} \omega^{bj}\ket{j} \bigg{)} \otimes \bigg{(}\frac{1}{\sqrt{p^n}}\sum_{y \in \Fp^n} \omega^{\langle x, y \rangle}\ket{y} \bigg{)} = \frac{1}{\sqrt{p^{n + 1}}}\sum_{y \in \Fp^n}\sum_{j = 0}^{p - 1} \omega^{bj + \langle x, y\rangle} \ket{j}\ket{y}
    \end{align*}
    So, 
    \begin{align*}
        \Fanout_{n, p} \frac{1}{\sqrt{p^{n + 1}}}\sum_{y \in \Fp^n}\sum_{j = 0}^{p - 1} \omega^{bj + \langle x, y\rangle} \ket{j}\ket{y} &=  \frac{1}{\sqrt{p^{n + 1}}}\sum_{y \in \Fp^n}\sum_{j = 0}^{p - 1} \omega^{bj + \langle x, y\rangle} \ket{j}\ket{y^{(j)}}
    \end{align*}
    After rearranging we have
    \begin{align*}
        \frac{1}{\sqrt{p^{n + 1}}}\sum_{y \in \Fp^n}\sum_{j = 0}^{p - 1} \omega^{bj + \langle x, y\rangle} \ket{j}\ket{y^{(j)}} &= \frac{1}{\sqrt{p^{n + 1}}}\sum_{y \in \Fp^n} \sum_{j = 0}^{p - 1} \omega^{bj +\langle x, y^{(-j)}\rangle}\ket{j}\ket{y}\\
        &= \frac{1}{\sqrt{p^{n + 1}}}\sum_{y \in \Fp^n} \sum_{j = 0}^{p - 1} \omega^{bj +\langle x, y\rangle - j|x|}\ket{j}\ket{y}\\
        &= \frac{1}{\sqrt{p^{n + 1}}}\sum_{y \in \Fp^n} \sum_{j = 0}^{p - 1} \omega^{\langle x, y\rangle}\omega^{j(b - |x|)}\ket{j}\ket{y}\\
        &= Q^{\otimes (n + 1)}\M_{n, p}\ket b\ket{x}
    \end{align*}
    Thus, $\M_{n, p} = (Q_p^\dag)^{\otimes{n + 1}}\Fanout_{n, p}Q_p^{\otimes{n + 1}}$ as claimed. 
\end{proof}
\end{document}